\documentclass[12pt, a4paper]{article}

\usepackage[utf8]{inputenc}
\usepackage[T1]{fontenc}
\usepackage[english]{babel}
\usepackage{amsmath, amsfonts, amssymb, amsthm}
\usepackage{mathtools}
\usepackage{geometry}
\usepackage{booktabs}
\usepackage{physics}
\usepackage{graphicx}
\usepackage{multirow}
\usepackage{chemfig}
\usepackage{float}
\usepackage{arydshln}
\usepackage{cite}
\usetikzlibrary{calc}

\newtheorem{theorem}{Theorem}
\newtheorem*{theorem*}{Theorem}
\newtheorem{corollary}{Corollary}[theorem]

\title{Mirror Symmetry of the NMR Spectrum and the Connection with the Structure of Spin Hamiltonian Matrix Representations}
\author{Dmitry A. Cheshkov and Dmitry O. Sinitsyn}
\date{\today}

\begin{document}

\maketitle

\begin{abstract}
This work provides a comprehensive theoretical framework for understanding the symmetry properties of High-Resolution NMR spectra.
We analyze the conditions under which a spectrum exhibits mirror symmetry (palindromicity).
We demonstrate that such symmetry can arise from two distinct mechanisms: (1) the direct geometric bisymmetry of the Hamiltonian matrix in a generalized canonical basis (typical for balanced systems like $A_nB_n$ or $A_nX_n$), and (2) a more fundamental property of topological isospectrality (similarity) under parameter exchange induced by the internal symmetry of the spin system, which applies even when the matrix lacks geometric symmetry (as observed in $AA'BB'$ systems).
\end{abstract}

\tableofcontents
\newpage

\section{General Formulas}

The Hamiltonian of a nuclear spin system in a strong magnetic field (high-resolution NMR) is represented as the sum of the Zeeman interaction and spin-spin interactions:

\begin{equation}
    \hat{H} = \hat{H}_Z + \hat{H}_{SS} = \hat{H}_Z + (\hat{H}_J + \hat{H}_D)
\end{equation}

Where the components are defined as follows:

\begin{itemize}
    \item \textbf{Zeeman Interaction ($\hat{H}_Z$):} Interaction of spins with the external magnetic field $B_0$.
    \begin{equation}
        \hat{H}_{Z}=\sum_{i=1}^{n}\nu_{i}\hat{I}_{iz}
    \end{equation}
    Here $\nu_i$ is the resonance frequency of the $i$-th nucleus, and $\hat{I}_{iz}$ is the operator of the z-component of the spin angular momentum.
    \item \textbf{Spin-Spin Interaction ($\hat{H}_{SS}$):} The sum of isotropic scalar ($J$) and secular dipolar ($D$) contributions.
    In the secular approximation (high field), only the terms commuting with total $I_z$ are retained:
    \begin{equation}
        \hat{H}_{SS} = \sum_{i<j} J_{ij} (\hat{\vec{I}}_{i} \cdot \hat{\vec{I}}_{j}) + \sum_{i<j} d_{ij}(3\hat{I}_{iz}\hat{I}_{jz}-\hat{\vec{I}}_{i} \cdot \hat{\vec{I}}_{j})
    \end{equation}
    where $J_{ij}$ is the scalar coupling constant and $d_{ij}$ is the residual dipolar coupling constant (if present).
\end{itemize}

\section{Canonical Basis and Matrix Symmetries}

It is important to distinguish between the symmetry of a specific spin system and the general symmetry properties of the Hamiltonian matrix representation.
In this section, we define the \textit{Canonical Basis}~$-$~a general-purpose basis applicable to any spin system.
We then demonstrate that in this basis, the Hamiltonian matrices possess inherent symmetries (persymmetry and bisymmetry) that stem solely from the algebra of spin operators.
For the matrix representation, we employ the basis of simple products of single-spin functions (\textit{Product functions basis}).
Each state $|\psi_k \rangle$ of a system of $N$ spins (where $I=1/2$) is defined by the set of magnetic quantum numbers $m_i = \pm 1/2$:

\begin{equation}
    |\psi_k \rangle = |m_1^{(k)}, m_2^{(k)}, \dots, m_N^{(k)} \rangle
\end{equation}

\subsection{Basis Ordering and Structure}
The basis functions are ordered to satisfy two fundamental conditions simultaneously:

\begin{enumerate}
    \item \textbf{$M_z$-Sorting:} The states are arranged by the descending order of the total spin projection $M_z = \sum m_i$.
    This ensures that the Hamiltonian matrix has a block-diagonal structure.
    \item \textbf{Spin-Inversion Centrosymmetry:} Within the $M_z$ ordering, the sequence is chosen such that the entire basis is centrosymmetric with respect to the global spin inversion operator $\hat{P}$ (where $\hat{P}$ flips all spins $m_i \to -m_i$).
\end{enumerate}

\subsection{Mathematical Definition of the Symmetry}
This centrosymmetric property implies a rigorous mapping between the physical operation of spin inversion and the index numbers of the basis functions.
If the dimension of the Hilbert space is $D = 2^N$, then for any $k$-th basis function:

\begin{equation}
    \hat{P} |\psi_k \rangle = |\psi_{D + 1 - k} \rangle
\end{equation}

This relationship links the algebraic properties of the spin operators directly to the persymmetry (symmetry with respect to the anti-diagonal) of the Hamiltonian matrix.

\subsection{Zeeman Interaction: Anti-Persymmetry}

In the product basis, the operator $\hat{H}_Z$ is strictly diagonal since the basis functions are eigenfunctions of $\hat{I}_{iz}$.
\subsubsection{Matrix Element}
The energy of state $| \mathbf{a} \rangle$ is the sum of resonance frequencies weighted by spin projections:
\begin{equation}
    \langle \mathbf{a} | \hat{H}_Z | \mathbf{a} \rangle = \sum_{i=1}^N \nu_i m_i^{(a)}
\end{equation}

\subsubsection{Symmetry Property}
Applying the inversion operator $\hat{P}$ flips all signs $m_i \to -m_i$, reversing the sign of the total Zeeman energy.
Due to the centrosymmetric basis property ($k \leftrightarrow D+1-k$), this implies that the Zeeman matrix is antisymmetric with respect to the anti-diagonal (anti-persymmetric):
\begin{equation}
    (H_Z)_{ii} = -(H_Z)_{D+1-i, D+1-i}
\end{equation}

\subsection{Spin-Spin Interaction: Bisymmetry}

\subsubsection{General Matrix Element Formula}
The full matrix element between states $|\mathbf{a}\rangle$ and $|\mathbf{b}\rangle$ is given by explicitly summing over all pairs:

\begin{equation}
\langle \mathbf{a} | \hat{H}_{SS} | \mathbf{b} \rangle = \sum_{k < l}^N \Bigg[ 
\underbrace{\delta_{\mathbf{a},\mathbf{b}} (J_{kl} + 2d_{kl}) m_k^{(a)} m_l^{(a)}}_{\text{Diagonal}} 
\;+\;
\underbrace{\frac{1}{2}(J_{kl} - d_{kl}) \hat{\Delta}_{kl}^{(\mathbf{a},\mathbf{b})}}_{\text{Off-diagonal}} 
\Bigg]
\end{equation}

Where the auxiliary symbols are defined as follows:

\begin{itemize}
    \item $\delta_{\mathbf{a},\mathbf{b}}$ (\textbf{Small Delta}) is the Kronecker delta, which selects the diagonal elements:
    \begin{equation}
        \delta_{\mathbf{a},\mathbf{b}} = \begin{cases} 1, & \text{if } |\mathbf{a}\rangle = |\mathbf{b}\rangle \\ 0, & \text{if } |\mathbf{a}\rangle \neq |\mathbf{b}\rangle \end{cases}
    \end{equation}
    
    \item $\hat{\Delta}_{kl}^{(\mathbf{a},\mathbf{b})}$ (\textbf{Large Delta}) is the selection filter for flip-flop transitions.
    It equals 1 if and only if the states differ exactly by the exchange of spins $k$ and $l$ (where $m_k \neq m_l$):
    \begin{equation}
        \hat{\Delta}_{kl}^{(\mathbf{a},\mathbf{b})} = \begin{cases} 1, & \text{if } |\mathbf{b}\rangle \in \{ \hat{I}_k^+ \hat{I}_l^- |\mathbf{a}\rangle, \: \hat{I}_k^- \hat{I}_l^+ |\mathbf{a}\rangle \} \\ 0, & \text{otherwise} \end{cases}
    \end{equation}
\end{itemize}

\vspace{1em}
\noindent \textit{Note:} While the secular Hamiltonian includes both scalar ($J$) and residual dipolar ($d$) couplings with different coefficients (see Eq. 8), both tensors share the same spatial symmetry properties defined by the molecular geometry.
For brevity, in the subsequent symmetry analysis, we will refer to the generalized coupling matrix simply as the \textit{$J$-coupling matrix}, implying that it effectively incorporates all spatial interaction constants.

\subsubsection{Bisymmetry Property}
The value of the spin-spin interaction matrix element is \textit{invariant} under global spin inversion $\hat{P}$:
\begin{equation}
    \langle \mathbf{a} | \hat{P}^\dagger \hat{H}_{SS} \hat{P} | \mathbf{b} \rangle = \langle \mathbf{a} | \hat{H}_{SS} | \mathbf{b} \rangle
\end{equation}
\textbf{Conclusion:} The $\hat{H}_{SS}$ matrix is symmetric with respect to both the main diagonal (Hermitian) and the anti-diagonal (persymmetric), making it a \textit{bisymmetric matrix}.
This property is universal.

\section{Elementary Spectral Transformations}

\subsection{Spectrum Translation via Frequency Offset}
In experimental NMR, changing the transmitter offset corresponds to a rigid translation of the spectrum.
\subsubsection{Hamiltonian Transformation}
Adding a constant offset $\Omega$ to all resonance frequencies ($\nu_i \to \nu_i + \Omega$) modifies the Hamiltonian:
\begin{equation}
    \hat{H}' = \hat{H} + \Omega \hat{I}_z^{tot}
\end{equation}
Since $[\hat{H}, \hat{I}_z^{tot}] = 0$ in the secular approximation, the eigenstates remain unchanged, but the energy eigenvalues are shifted linearly: $E'_n = E_n + \Omega M_z^{(n)}$.
\subsubsection{Shift of Spectral Lines}
For observable single-quantum transitions ($\Delta M_z = -1$), the frequency shift is:
\begin{equation}
    \omega'_{mn} = (E'_m - E'_n) = (E_m - E_n) + \Omega (M_z^{(m)} - M_z^{(n)}) = \omega_{mn} - \Omega
\end{equation}
Since the shift is identical for all lines, the internal multiplet structure (determined by $J$-couplings) remains invariant.
Thus, the offset operation performs a pure translation of the spectral pattern.

\subsection{Frequency Inversion and Spectral Reflection}
It is a fundamental property that inverting the signs of all resonance frequencies leads to a mirror reflection of the spectrum.
This reflection applies not only to the positions of the multiplets but also to their internal fine structure.
\subsubsection{Transformation of the Hamiltonian}
Let $\hat{H}'$ be the Hamiltonian where all resonance frequencies are inverted ($\nu_i \to -\nu_i$).
This is equivalent to:
\begin{equation}
    \hat{H}' = -\hat{H}_Z + \hat{H}_{SS}
\end{equation}
Applying the spin inversion operator $\hat{P}$:
\begin{equation}
    \hat{P}^\dagger \hat{H}' \hat{P} = -(-\hat{H}_Z) + \hat{H}_{SS} = \hat{H}_{original}
\end{equation}
Thus, $\hat{H}'$ and $\hat{H}_{original}$ are unitarily similar and share the \textit{same set of energy eigenvalues}.
This implies that the set of energy intervals (which define the magnitude of $J$-coupling splittings) remains numerically invariant.
\subsubsection{Inversion of Selection Rules}
Although the energy levels are the same, the observable transitions change due to the transformation of the eigenstates: $|\tilde{n}\rangle = \hat{P}|n\rangle$.
The observable signal depends on the transition matrix elements of the lowering operator $\hat{I}^-$.
The operator $\hat{P}$ transforms the lowering operator into the raising operator:
\begin{equation}
    \hat{P}^\dagger \hat{I}^- \hat{P} = -\hat{I}^+
\end{equation}
Consequently, a transition allowed in the original system at frequency $\omega = E_f - E_i$ corresponds to a transition in the inverted system associated with the reverse process ($\Delta M = +1$), appearing at frequency $-\omega$.
\subsubsection{Reflection of Fine Structure}
Crucially, this mechanism reflects the \textit{entire topology} of the spectrum.
One might intuitively expect that since the scalar couplings $J_{ij}$ are not inverted, the multiplets would simply shift positions while retaining their original shape.
However, the unitary transformation $\hat{P}$ affects the mixing coefficients of the wavefunctions.
As a result:
\begin{itemize}
    \item The center of the multiplet moves from $\nu$ to $-\nu$.
    \item The internal structure is mirrored: a transition that was on the "right" side of the multiplet (e.g., $\nu + J/2$) maps to the "left" side relative to the new center ($-\nu - J/2$).
    \item Intensity distortions, such as the "roof effect" (where inner lines are stronger), are also strictly reflected.
\end{itemize}
Thus, $\hat{H}'(-\nu)$ generates a spectrum $S(-\omega)$ which is a perfect mirror image of $S(\omega)$ down to the finest detail.

\subsection{Frequency Inversion and Magnet Reversal}
Consider the transformation where all resonance frequencies are inverted ($\nu_i \to -\nu_i$), which is mathematically equivalent to $\hat{H}' = -\hat{H}_Z + \hat{H}_{SS}$.
Physically, this corresponds to reversing the external magnetic field ($B_0 \to -B_0$).
\begin{itemize}
    \item \textbf{Eigenvalues:} $\hat{H}'$ is unitarily equivalent to $\hat{H}$ via the spin inversion operator $\hat{P}$, so they share the same set of eigenvalues.
    \item \textbf{Selection Rules:} $\hat{P}$ transforms the lowering operator $\hat{I}^-$ into $\hat{I}^+$, effectively reversing the "direction" of transitions ($|n\rangle \to |m\rangle$ becomes $|\tilde{m}\rangle \to |\tilde{n}\rangle$).
\end{itemize}
\textbf{Conclusion:} The spectrum in a negative field is the mirror image of the spectrum in a positive field: $S(\omega) \to S(-\omega)$.

\section{Spectral Symmetry: The Ideal Case ($[\hat{H}, \hat{Q}] = 0$)}

\paragraph{The Conflict of Symmetries}
The total Hamiltonian $\hat{H} = \hat{H}_Z + \hat{H}_{SS}$ generally yields an asymmetric spectrum because its components transform differently under $\hat{P}$: $\hat{H}_Z$ is anti-persymmetric, while $\hat{H}_{SS}$ is bisymmetric.
\paragraph{Restoration via Generalized Parity}
Symmetry can be restored if the anti-persymmetry of the Zeeman term is compensated by the structural symmetry of the parameters.
We introduce the \textit{Generalized Parity Operator} $\hat{Q} = \hat{P} \times \hat{\Pi}$, where $\hat{\Pi}$ is the permutation that reverses the order of spins ($i \leftrightarrow N+1-i$).
The fundamental algebraic condition for symmetry is $[\hat{H}, \hat{Q}] = 0$.

\subsection{Connection: Why Commutation Implies Symmetry}
Why does the algebraic condition $[\hat{H}, \hat{Q}] = 0$ force the spectrum to be a palindrome?
\subsubsection{The Indistinguishability Argument}
\begin{enumerate}
    \item \textbf{Magnet Reversal Effect:} Changing the sign of the magnetic field reflects the spectrum: $S(\omega) \to S(-\omega)$ (Section 5.3).
    \item \textbf{Structural Symmetry:} If the system parameters satisfy the conditions for $[\hat{H}, \hat{Q}] = 0$, then the state of the system in a reversed field ($-\boldsymbol{\nu}$) is structurally identical to the original state ($\boldsymbol{\nu}$), differing only in the spins order (operator $\hat{\Pi}$).
    \item \textbf{Conclusion:} Since the physics is invariant under relabeling, the spectrum of the "reversed magnet" system must be identical to the spectrum of the "original" system:
    \begin{equation}
        S(-\omega) = S(\omega)
    \end{equation}
\end{enumerate}
The commutation relation $[\hat{H}, \hat{Q}] = 0$ is the rigorous operator statement that the system is invariant under the combined operation of field inversion and relabeling.

\subsection{The Generalized Canonical Basis ($\hat{Q}$-basis)}

To make the spectral symmetry manifest in the matrix form, we introduce a reordered basis adapted to the combined operator $\hat{Q}$.
In this basis, the pairing of states is defined by the generalized parity operator $\hat{Q} = \hat{P}\hat{\Pi}$.

\subsection*{Mechanism of Symmetry Restoration}
In a centered spectrum ($\sum \nu_i = 0$), the resonance frequencies obey $\nu_k = -\nu_{N+1-k}$.
Since the paired states in this basis are structural mirror images of each other, every spin contribution $+\nu_i$ in one state is perfectly compensated by a contribution $-\nu_i$ in its partner.

Consider a 3-spin system. The state $| +-- \rangle$ ($m_1=+, m_2=-, m_3=-$) transforms into its symmetric partner as follows:
\begin{equation*}
    | +-- \rangle \xrightarrow{\hat{\Pi}} | --+ \rangle \xrightarrow{\hat{P}} | ++- \rangle
\end{equation*}

This combined operation guarantees that if the resonance frequencies are antisymmetric relative to the center, the paired states possess identical Zeeman energies ($E = E'$).
This transforms the Zeeman matrix for these \textit{Shell} states from anti-persymmetric to \textit{persymmetric}, matching the symmetry of the spin-spin interaction.

\subsubsection{Structure and Symmetry of the Zero-Quantum Block ($M_z=0$)}
For subspaces with $M_z \neq 0$, the operator $\hat{Q}$ simply maps the entire block $+M$ to the block $-M$.
However, in systems with an even number of spins ($N \ge 2$), the Zero-Quantum subspace ($M_z = 0$) is mapped onto itself.
It is only within this block that \textit{self-conjugate states} (states invariant under $\hat{Q}$) can arise.
To manage this complexity, the basis is organized into a \textit{Nested "Shell-Core" Structure}, which leads to a specific hybrid symmetry of the Zeeman interaction.

\begin{enumerate}
    \item \textbf{The Shell (Transverse Pairs):} 
    States that are \textit{not} invariant under $\hat{Q}$ form pairs $|\psi_a\rangle \leftrightarrow |\psi_b\rangle$.
    \begin{itemize}
        \item \textit{Symmetry Implication:} As a consequence of the restoration mechanism described above, the total Zeeman energies of the pair are strictly equal ($E_a = E_b$).
        This equality transforms the Zeeman matrix relation from anti-persymmetry ($H_{ij} = -H_{ji}$) to \textit{persymmetry} ($H_{ij} = H_{ji}$), resolving the conflict with the symmetric $J$-coupling term.
    \end{itemize}

    \item \textbf{The Core (Self-Conjugate States):} 
    States that are invariant under $\hat{Q}$ lie in the exact center of the basis ($M_z = 0$).
    They are mapped onto themselves (or within the small core block) by $\hat{Q}$, meaning they lack a distinct "frequency-compensating" partner.
    \begin{itemize}
        \item \textit{Mechanism of Persistence:} For these states, the pairing is defined solely by spin inversion $\hat{P}$, which reverses all spins without swapping nuclei.
        This operation simply flips the sign of the total energy: $E \to -E$.
        \item \textit{Symmetry Implication:} Since there is no partner with equal energy to restore balance, the condition $E_i = -E_j$ persists.
        Therefore, the Zeeman matrix in the "Core" region remains \textit{anti-persymmetric}, preventing the total Hamiltonian from becoming globally bisymmetric.
        \item \textit{Algebraic Consistency:} Crucially, this violation of geometric bisymmetry does not imply a breakdown of the fundamental commutation relation $[\hat{H}, \hat{Q}] = 0$ (for balanced systems).
        The apparent conflict is an artifact of the matrix representation: for self-conjugate states, $\hat{Q}$ acts via spin inversion $\hat{P}$ (flipping Zeeman energy signs) rather than by swapping degenerate partners.
        Thus, algebraic symmetry is preserved despite the hybrid geometric structure.
    \end{itemize}
\end{enumerate}

\textbf{Conclusion:} The total Zeeman matrix in the $\hat{Q}$-basis possesses a \textit{hybrid symmetry}: it is persymmetric in the outer shell but anti-persymmetric in the inner core.
This hybrid nature prevents the total Hamiltonian $\hat{H} = \hat{H}_Z + \hat{H}_{SS}$ from being globally bisymmetric, even if $\hat{H}_{SS}$ is perfect. Crucially, this geometric asymmetry of the matrix representation
does not violate the algebraic symmetry condition.

\subsection{Rigorous Mathematical Proofs}

\subsubsection{Proof via Unitary Transformation}
The NMR spectrum $S(\omega)$ is the Fourier transform of the correlation function.
Applying the unitary transformation $\hat{Q}$ to the trace (noting that $\hat{Q}^\dagger \hat{I}_x \hat{Q} = -\hat{I}_x$ and the trace is invariant under unitary similarity $\text{Tr}(A) = \text{Tr}(\hat{Q}^\dagger A \hat{Q})$):
\begin{equation}
    \text{Corr}'(t) = \text{Tr} \left( e^{-i\hat{H}t} (-\hat{I}_x) e^{i\hat{H}t} (-\hat{I}_x) \right) = \text{Corr}(t)
\end{equation}
Since $\hat{Q}$ involves spin inversion, it effectively maps $S(\omega) \to S(-\omega)$.
If the correlation function is invariant, the spectrum is symmetric.

\subsubsection{Proof via Method of Moments}
The symmetry can also be proven by showing that all odd Van Vleck moments vanish.
The $n$-th moment $M_n$ is defined as:
\begin{equation}
    M_n = \int_{-\infty}^{\infty} \omega^n I(\omega) d\omega \propto \text{Tr} \{ [\hat{H}, [\hat{H}, \dots [\hat{H}, \hat{I}_x]\dots]] \hat{I}_x \}
\end{equation}
Applying the transformation $\hat{P}$ maps $\hat{H} \to \hat{H}'(-\nu)$.
Since the spectrum of the inverted system is the mirror image of the original, $I(\omega) = I(-\omega)$.
Consequently, all odd moments $M_{1}, M_{3}, \dots$ must be identically zero.

\subsection{Generalized Symmetry (Reflect and Shift)}
If the system possesses structural symmetry ($J_{ij}$ are centrosymmetric) but the transmitter offset is not set to zero (i.e., $\nu_i \neq -\nu_{N+1-i}$), the spectrum is still a palindrome, but centered at $\omega_0 \neq 0$.
The generalized symmetry condition is:
\begin{equation}
    S(2\omega_0 - \omega) = S(\omega)
\end{equation}
This implies that the mirror image of the spectrum can be superimposed onto the original spectrum by a rigid linear translation.

\subsection{Invariance of the Characteristic Polynomial}

The condition of unitary similarity discussed above has a fundamental algebraic consequence concerning the characteristic polynomial of the Hamiltonian, defined as $P(\lambda) = \det(\hat{H} - \lambda \hat{E})$.

\begin{theorem}[Spectral Enantiomerism]
If two distinct Hamiltonians possess identical characteristic polynomials, then their spectra either coincide or are mirror reflections of each other (spectral ``enantiomers'').
\end{theorem}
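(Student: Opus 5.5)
The plan is to pass from the algebraic datum, equality of $P(\lambda)=\det(\hat H-\lambda\hat E)$ for two Hamiltonians $\hat H_1,\hat H_2$ of the secular form of Section 1, to a statement about transition frequencies and intensities. I will use the block structure of the canonical basis throughout.

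\textbf{Step 1 (eigenvalues and unitary similarity).} Both matrices are Hermitian, so equal characteristic polynomials mean equal eigenvalues with multiplicities. Hence there is a unitary $\hat U$ with $\hat U^\dagger \hat H_2 \hat U=\hat H_1$, which is the ``unitary similarity'' invoked just before the theorem.

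\textbf{Step 2 (block-compatible $\hat U$).} Because $[\hat H_k,\hat I_z^{tot}]=0$, each $P_k(\lambda)$ factors as $\prod_M P_k^{(M)}(\lambda)$ over the $M_z$ blocks, whose dimensions are $\binom{N}{N/2-M}$. I would then show that the factorisations are matched in one of two ways:
\[
P_2^{(M)}=P_1^{(M)}\ \ \forall M \qquad\text{or}\qquad P_2^{(M)}=P_1^{(-M)}\ \ \forall M .
\]
The first tool is the trace identity $\operatorname{Tr}\hat H_Z^{(M)}=M\sum_i\nu_i$. The second is the fact, from Section 3, that $\hat H_{SS}$ is invariant under $\hat P$ while $\hat H_Z$ changes sign. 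Together these fix how the block factors of equal degree can be permuted, since only the pairs $M\leftrightarrow -M$ share a dimension. With this matching, $\hat U$ can be assembled block by block, either preserving each $M_z$ block or composed with $\hat P$ so that it exchanges $M$ and $-M$. In the second case $\hat U$ inverts $\hat I_z^{tot}$, exactly as in Section 4.2.

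\textbf{Step 3 (frequencies).} Observable lines are differences $E_m-E_n$ between eigenvalues in adjacent blocks. In the first case the block-matched eigenvalues give the same set of single-quantum frequencies. In the second case the adjacency $M\to M-1$ becomes $-M\to -M+1$, so every frequency $\omega$ is sent to $-\omega$. This reproduces the selection-rule inversion of Section 4.2, where $\hat P^\dagger\hat I^-\hat P=-\hat I^+$.

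\textbf{Step 4 (intensities).} This is the main obstacle. Intensities $|\langle m|\hat I^-|n\rangle|^2$ depend on eigenvectors, and equality of $P(\lambda)$ does not by itself control them. My first attempt is the method of moments of Section 4.4.2. The $n$-th moment is a trace of nested commutators of $\hat H$ with $\hat I_x$, and I would try to express these traces through the block polynomials $P^{(M)}$ together with the coupling invariants that enter them: the traces $\operatorname{Tr}(\hat H^{(M)})^k$ are polynomial in $\nu_i$ and $J_{ij}$. If the moments agree, or agree after $\omega\to-\omega$ in the second case, then the line shapes coincide or are mirrored.

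I expect this identification of the moments with block-spectral data to be the delicate point. If it fails in general, the fallback is to show that $\hat U$ from Step 2 can be chosen to map $\hat I^-$ to itself, or to $-\hat I^+$, up to phase within each block. That choice would immediately give $S_2(\omega)=S_1(\omega)$ or $S_2(\omega)=S_1(-\omega)$.
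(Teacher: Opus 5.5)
\textbf{There is a genuine gap, and it cannot be closed: the statement is false as written.}

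\textbf{Step 2 fails.} The argument breaks at your Step 2, because the dichotomy asserted there does not hold. $P(\lambda)$ records the multiset of eigenvalues, not how those eigenvalues are distributed over the $M_z$ blocks. None of your tools controls that distribution:
\begin{itemize}
\item $\operatorname{Tr}\hat H_Z^{(M)}=M\sum_i\nu_i$ involves each Hamiltonian's \emph{own} frequencies, so it does not link $\hat H_2$ to $\hat H_1$.
\item The $\hat P$-covariance only relates $\hat H(\boldsymbol{\nu})$ to $\hat H(-\boldsymbol{\nu})$, not to an arbitrary isospectral partner.
\item The dimension count fixes the degrees of the block factors, but not which roots go into which factor. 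You are tacitly assuming that block factors are permuted as wholes.
\end{itemize}

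\textbf{A two-spin counterexample.} Let $\Sigma=\nu_1+\nu_2$, $\delta=\nu_1-\nu_2\neq 0$ and $R=\sqrt{\delta^2+J^2}$. The levels are $\tfrac{\Sigma}{2}+\tfrac{J}{4}$ for $M=1$, $-\tfrac{\Sigma}{2}+\tfrac{J}{4}$ for $M=-1$, and $-\tfrac{J}{4}\pm\tfrac{R}{2}$ for $M=0$. The four lines lie at $\tfrac{\Sigma}{2}\pm\tfrac{J}{2}\pm\tfrac{R}{2}$, all with nonzero intensity $1\pm J/R$.
\begin{itemize}
\item The parameter sets $(\Sigma,\delta,J)=(4,2,1)$ and $(\sqrt{5},\sqrt{15},-1)$ both give the levels $\{\tfrac{9}{4},-\tfrac{7}{4},-\tfrac{1}{4}\pm\tfrac{\sqrt{5}}{2}\}$. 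Hence they have the same $P(\lambda)$.
\item However, the $M=0$ roots of one are the $M=\pm1$ roots of the other.
\item Their spectra are $2\pm\tfrac{1}{2}\pm\tfrac{\sqrt{5}}{2}$ and $\tfrac{\sqrt{5}}{2}\pm\tfrac{1}{2}\pm 2$. The consecutive spacings are $(1,\sqrt{5}-1,1)$ and $(1,3,1)$, so the two spectra are neither equal nor related by any reflection or translation.
\end{itemize}

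\textbf{Step 4 fails independently, even inside your first case.} Take the classical ABX ambiguity: exchange $\nu_A-\nu_B$ with $\tfrac{1}{2}(J_{AX}-J_{BX})$.
\begin{itemize}
\item This leaves every $M_z$-block spectrum unchanged. All levels depend only on $\nu_A+\nu_B$, $\nu_X$, $J_{AB}$, $J_{AX}+J_{BX}$ and the effective splittings $D_\pm$, which are invariant under the exchange.
\item Yet it changes the X-part intensities. This is exactly why the two ABX solutions are discriminated by the X lines.
\end{itemize}
So intensities, and hence the spectral moments, are not functions of the block spectra. In particular, no $\hat U$ with $\hat U^\dagger\hat I^-\hat U=\hat I^-$ need exist, and your fallback cannot be proved.

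\textbf{How the paper handles this, and what can be salvaged.} The paper's proof does not close this gap either. After noting isospectrality, it simply splits into two cases: the linking $\hat U$ either preserves the selection rules or inverts them as $\hat Q$ does. That is, it postulates your Step 4 fallback rather than deriving it. What is provable is the version with that hypothesis built in, where $\hat H_2=\hat U^\dagger\hat H_1\hat U$:
\begin{itemize}
\item If $\hat U^\dagger\hat I^-\hat U=e^{i\phi}\hat I^-$, the spectra coincide.
\item If $\hat U^\dagger\hat I^-\hat U=-\hat I^+$ (as for $\hat P$ and $\hat Q$), then $\langle m_2|\hat I^-|n_2\rangle=-\overline{\langle n_1|\hat I^-|m_1\rangle}$, and so $S_2(\omega)=S_1(-\omega)$.
\end{itemize}
Under that hypothesis your Steps 1 and 3 suffice and Step 2 becomes unnecessary. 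Without it, equality of characteristic polynomials is not enough.
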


\begin{proof}
The identity of characteristic polynomials, $P_1(\lambda) = P_2(\lambda)$, implies that the two Hamiltonians share the exact same set of eigenvalues (energy levels). Mathematically, this means the matrices are \textit{isospectral}. Consequently, the set of all possible transition frequencies $\omega_{ij} = E_i - E_j$ is identical for both systems.

The distinction between "identical" and "mirror-image" spectra arises from the properties of the eigenvectors, which determine the transition intensities:
\begin{enumerate}
    \item \textbf{Identical Spectra:} If the unitary transformation $\hat{U}$ linking the Hamiltonians ($\hat{H}' = \hat{U}^\dagger \hat{H} \hat{U}$) preserves the selection rules (e.g., commutes with total spin $I_z$), the transition intensities remain mapped to the same frequencies.
    \item \textbf{Spectral Enantiomers:} If the transformation inverts the selection rules (e.g., maps transitions $\Delta M = -1$ to $\Delta M = +1$, as seen with the operator $\hat{Q}$ involving spin inversion), the intensities are mapped to the reflected frequencies ($-\omega$).
\end{enumerate}
\end{proof}

\section{Beyond Geometric Symmetry: The Case of $AA'BB'$ ($AA'XX'$) Spin System}

This section presents a comprehensive analysis of the origin of mirror symmetry in the NMR spectra of $AA'BB'$ ($AA'XX'$) spin systems [\citen{Pople1959}, p.~116; \citen{Corio1966}, p.~376; \citen{Gunther2013}, p.~192], based on the generalized parity operator $\hat{Q} = \hat{P}\hat{\Pi}$ framework.
We demonstrate that the observed spectral symmetry is an inherent algebraic property dictated by the structure of the spin Hamiltonian.
By decomposing the Hamiltonian into symmetry-distinct invariant subspaces, we identify two independent mechanisms that ensure spectral symmetry despite the geometric asymmetry of the $J$-coupling network ($J_{AA'} \neq J_{BB'}$).

\subsection{System Description}
The analyzed $AA'BB'$ spin system is defined by the following parameters:
\[
\renewcommand{\arraystretch}{1.5}
\begin{array}{r|cccc}
 Spins & A & A' & B & B' \\
 \hline
 \nu_i & \nu_0 + \Delta/2 & \nu_0 + \Delta/2 & \nu_0 - \Delta/2 & \nu_0 - \Delta/2 \\
 J\text{-}couplings & & J_{AA'} & J_{AB} & J_{AB'} \\
 \ & & & J_{AB'} & J_{AB} \\
 & & & & J_{BB'}
\end{array}
\]

\begin{itemize}
    \item \textbf{Spins:} Four spins labeled $A, A', B, B'$.
    \item \textbf{Resonance Frequencies:} Symmetric distribution relative to the center $\nu_0$:
    \[ \nu_A = \nu_{A'} = \nu_0 + \Delta/2, \quad \nu_B = \nu_{B'} = \nu_0 - \Delta/2 \]
    \item \textbf{\textit{J}-Coupling Network:} The scalar coupling matrix $\boldsymbol{J}$ is defined by the unique coupling constants ($J_{AB} \neq J_{AB'}$ and $J_{AA'} \neq J_{BB'}$) as follows:
    \[
    \boldsymbol{J} = 
    \begin{pmatrix}
     0 & J_{AA'} & J_{AB} & J_{AB'} \\
     J_{AA'} & 0 & J_{AB'} & J_{AB} \\
     J_{AB} & J_{AB'} & 0 & J_{BB'} \\
     J_{AB'} & J_{AB} & J_{BB'} & 0
    \end{pmatrix}
    \]
\end{itemize}

\subsection{Representations in P-Basis and Q-Basis}

\subsubsection{The P-Basis (P-Ordered Product Basis)}
The P-basis is partitioned into subspaces according to the total angular momentum projection $M_z$.
Crucially, the ordering is defined such that the entire basis sequence forms a palindrome with respect to the collective spin inversion operator $\hat{P}$.
This means the second half of the basis is the state-by-state spin inversion of the first half (in reverse order).

\vspace{1em}
{\footnotesize
\noindent\textit{Spin Product basis functions are conveniently defined using the binary representation of integer numbers (from $0$ to $2^N-1$), grouped by Hamming weight.
For four spins, the P-ordered Product basis takes the form:}
\[
\resizebox{\textwidth}{!}{
\renewcommand{\arraystretch}{1.5}
\setlength{\tabcolsep}{4pt}
\newcommand{\stP}[1]{\makebox[3.6em][c]{$#1$}}
\newcommand{\arrP}[1]{\makebox[3.6em][c]{\scriptsize$#1$}}
$
\begin{array}{c|c|c|c|c}
 M_z=+2 & M_z=+1 & M_z=0 & M_z=-1 & M_z=-2 \\
 \hline
 \stP{0} & 
 \stP{1}\stP{2}\stP{4}\stP{8} & 
 \stP{6}\stP{5}\stP{3}\;\makebox[1em]{$\boldsymbol{||}$}\;\stP{12}\stP{10}\stP{9} & 
 \stP{7}\stP{11}\stP{13}\stP{14} & 
 \stP{15} 
 \\
 \arrP{\uparrow\uparrow\uparrow\uparrow} &
 \arrP{\uparrow\uparrow\uparrow\downarrow}\arrP{\uparrow\uparrow\downarrow\uparrow}\arrP{\uparrow\downarrow\uparrow\uparrow}\arrP{\downarrow\uparrow\uparrow\uparrow} &
 \arrP{\uparrow\downarrow\downarrow\uparrow}\arrP{\uparrow\downarrow\uparrow\downarrow}\arrP{\uparrow\uparrow\downarrow\downarrow}\;\makebox[1em]{}\;\arrP{\downarrow\downarrow\uparrow\uparrow}\arrP{\downarrow\uparrow\downarrow\uparrow}\arrP{\downarrow\uparrow\uparrow\downarrow} &
 \arrP{\uparrow\downarrow\downarrow\downarrow}\arrP{\downarrow\uparrow\downarrow\downarrow}\arrP{\downarrow\downarrow\uparrow\downarrow}\arrP{\downarrow\downarrow\downarrow\uparrow} &
 \arrP{\downarrow\downarrow\downarrow\downarrow}
\end{array}
$
}
\]
\textit{The symbol $\boldsymbol{||}$ demarcates the center of the basis and relates the second half to the first via mirror reflection with spin inversion.}
}

\begin{itemize}
    \item \textbf{Zeeman Hamiltonian ($\hat{H}_Z$):} The matrix is diagonal.
    Since the spectrum is centered, the elements exhibit \textit{anti-persymmetry}:
    \[ (H_Z)_{i,i} = -(H_Z)_{N-i+1, N-i+1} \]
    \item \textbf{Spin-Spin Hamiltonian ($\hat{H}_{SS}$):} The matrix is \textit{bisymmetric} (symmetric with respect to both the main and anti-diagonals).
\end{itemize}

\subsubsection{The Q-Basis (Generalized Parity Basis)}

The Q-basis is constructed to align with the generalized parity operator $\hat{Q} = \hat{P}\hat{\Pi}$.
In the first half, it coincides with the P-basis; in the second half, the spin order of the P-basis states is reversed.
This additional spin order reversal results in some basis functions within the $M_z = 0$ subblock being mapped onto themselves under the action of $\hat{Q}$, thereby forming the kernel of the Q-basis.

\vspace{1em}
{\footnotesize
\noindent\textit{For four spins, the Q-ordered Product basis takes the form:}
\[
\resizebox{\textwidth}{!}{
\renewcommand{\arraystretch}{1.2}
\setlength{\tabcolsep}{4pt}
\newcommand{\stQ}[1]{\makebox[3.6em][c]{$#1$}}
\newcommand{\arrQ}[1]{\makebox[3.6em][c]{\scriptsize$#1$}}
$
\begin{array}{c|c|c|c|c}
 M_z=+2 & M_z=+1 & M_z=0 & M_z=-1 & M_z=-2 \\
 \hline
 \stQ{0} & 
 \stQ{1}\stQ{2}\stQ{4}\stQ{8} & 
 \stQ{6}\;\stQ{5}\stQ{3}\;\makebox[1em]{$\boldsymbol{||}$}\;\stQ{12}\stQ{10}\;\stQ{9} & 
 \stQ{14}\stQ{13}\stQ{11}\stQ{7} & 
 \stQ{15} 
 \\
 \arrQ{\uparrow\uparrow\uparrow\uparrow} &
 \arrQ{\uparrow\uparrow\uparrow\downarrow}\arrQ{\uparrow\uparrow\downarrow\uparrow}\arrQ{\uparrow\downarrow\uparrow\uparrow}\arrQ{\downarrow\uparrow\uparrow\uparrow} &
 \arrQ{\uparrow\downarrow\downarrow\uparrow}\;
 \arrQ{\uparrow\downarrow\uparrow\downarrow}\arrQ{\uparrow\uparrow\downarrow\downarrow}\;\makebox[1em]{}\;\arrQ{\downarrow\downarrow\uparrow\uparrow}\arrQ{\downarrow\uparrow\downarrow\uparrow}
 \;\arrQ{\downarrow\uparrow\uparrow\downarrow} &
 \arrQ{\downarrow\downarrow\downarrow\uparrow}\arrQ{\downarrow\downarrow\uparrow\downarrow}\arrQ{\downarrow\uparrow\downarrow\downarrow}\arrQ{\uparrow\downarrow\downarrow\downarrow} &
 \arrQ{\downarrow\downarrow\downarrow\downarrow}
 \\[-0.5ex] 
 \multicolumn{2}{c}{
    \underbrace{
        \phantom{\arrQ{\uparrow}}\quad\phantom{\arrQ{\uparrow}\arrQ{\uparrow}\arrQ{\uparrow}\arrQ{\uparrow}}
    }_{\text{Outer Shell}}
 } & 
 \multicolumn{1}{c}{
    \phantom{\arrQ{\uparrow}}\;
    \underbrace{
        \phantom{\arrQ{\uparrow}\arrQ{\uparrow}}\;\makebox[1em]{}\;\phantom{\arrQ{\uparrow}\arrQ{\uparrow}}
    }_{\text{Core}}
    \;\phantom{\arrQ{\uparrow}}
 } & 
 \multicolumn{2}{c}{
    \underbrace{
        \phantom{\arrQ{\uparrow}\arrQ{\uparrow}\arrQ{\uparrow}\arrQ{\uparrow}}\quad\phantom{\arrQ{\uparrow}}
    }_{\text{Outer Shell}}
 } 
\end{array}
$
}
\]
\textit{The symbol $\boldsymbol{||}$ demarcates the center of the basis.}
}

\paragraph{The Outer Shells ($M_z \neq 0$)}
These subblocks correspond to $\hat{Q}$-basis states for $M_z = \pm 2$ and $M_z = \pm 1$.
\begin{itemize}
    \item \textbf{Structure of $\hat{H}_{SS}$:} The global bisymmetry is lost. The "lower" block is the geometric reflection of the "upper" block, but with a \textit{complete parameter substitution} $J_{AA'} \leftrightarrow J_{BB'}$.
    \item \textbf{Structure of $\hat{H}_Z$:} The matrix is diagonal and exhibits global \textit{persymmetry} due to the resonance frequency balance ($\nu_A = -\nu_B$).
\end{itemize}

\paragraph{The Zero-Quantum Block ($M_z = 0$)}
In this central subblock, the Q-basis identifies the \textit{self-conjugate Core}.
\begin{enumerate}
    \item \textbf{Inner Shells:}
    Paired states surrounding the core that map into each other under $\hat{Q}$.
    \begin{itemize}
        \item \textit{Spin-Spin Term:} Retains \textit{bisymmetry}, but creates a symmetry conflict due to mixing with the Core.
        \item \textit{Zeeman Term:} Exhibits \textit{persymmetry}.
    \end{itemize}
    \item \textbf{The Core:}
    Self-conjugate states grouped in the center.
    \begin{itemize}
        \item \textit{Spin-Spin Term:} Retains \textit{bisymmetry}.
        \item \textit{Zeeman Term:} Retains \textit{anti-persymmetry} ($H_{ii} = -H_{N-i+1, N-i+1}$).
    \end{itemize}
\end{enumerate}

\subsection{Conjugation by Operator $\hat{Q}$}
The action of the operator $\hat{Q}$ on the Hamiltonian corresponds to the transformation $\hat{H}' = \hat{Q}^\dagger \hat{H} \hat{Q}$.

\subsubsection{Action in the P-Basis}
In the P-Ordered Product Basis, the spin-spin Hamiltonian matrix is naturally bisymmetric. Consequently, the geometric aspect of the transformation maps the matrix onto itself. The transformation manifests purely algebraically as a parameter swap ($J_{AA'} \leftrightarrow J_{BB'}$).

\subsubsection{Action in the Q-Basis}
\begin{enumerate}
    \item \textbf{Outer Shells ($M_z \neq 0$): Global Reflection} 
    For the spatially separated blocks, the transformation acts as a geometric reflection across the anti-diagonal.
    \item \textbf{Zero-Quantum Block ($M_z = 0$):} 
    The matrix is geometrically invariant.
    \begin{itemize}
        \item \textbf{Inner Shells:} Manifests as an \textit{in-place parameter swap} ($J_{AA'} \leftrightarrow J_{BB'}$).
        \item \textbf{The Core:} Numerical values remain strictly invariant because the matrix elements depend exclusively on the symmetric parameter sums ($\Sigma J_{intra}$).
    \end{itemize}
\end{enumerate}

\subsection{Symmetrized Representation}

Transitioning to a symmetry-adapted basis via the unitary transformation $U$ block-diagonalizes the Hamiltonian into a symmetric subspace and an asymmetric subspace.

\vspace{1em}
\noindent
\begin{minipage}{\linewidth}
{\footnotesize
\noindent\textit{Basis functions of the $AA'BB'$ spin system, symmetrized according to the irreducible representations of the $C_2$ symmetry group:}
\begin{center}
\makebox[\textwidth][c]{%
    \setlength{\tabcolsep}{3pt}
    \renewcommand{\arraystretch}{1.1}
    \newcommand{\vStrut}{\rule[-1.2em]{0pt}{3.0em}}
    \newcommand{\hMz}[1]{{\fontsize{6}{7}\selectfont $M_z=#1$}}
    \newcommand{\hRow}[1]{{\fontsize{6}{7}\selectfont \textbf{#1}}}
    \begin{tabular}{l | c | c | c | c | c}
     & \hMz{+2} & \hMz{+1} & \hMz{0} & \hMz{-1} & \hMz{-2} \\
     \hline
     \hRow{IR A} \textit{ \fontsize{6}{7}\selectfont (sym)} \quad & 
     \vStrut $|0\rangle$ & 
     \quad $\frac{|1\rangle+|2\rangle}{\sqrt{2}} \quad \frac{|4\rangle+|8\rangle}{\sqrt{2}}$ \quad & 
     \quad $\frac{|6\rangle+|9\rangle}{\sqrt{2}} \quad |3\rangle \quad |12\rangle \quad \frac{|5\rangle+|10\rangle}{\sqrt{2}}$ \quad & 
     \quad $\frac{|13\rangle+|14\rangle}{\sqrt{2}} \quad \frac{|7\rangle+|11\rangle}{\sqrt{2}}$ \quad & 
     \quad $|15\rangle$ \quad
     \\
     \hRow{IR B} \textit{ \fontsize{6}{7}\selectfont (asym)} \quad & 
     \vStrut & 
     \quad $\frac{|1\rangle-|2\rangle}{\sqrt{2}} \quad \frac{|4\rangle-|8\rangle}{\sqrt{2}}$ \quad & 
     \quad $\frac{|6\rangle-|9\rangle}{\sqrt{2}} \quad \frac{|5\rangle-|10\rangle}{\sqrt{2}}$ \quad & 
     \quad $\frac{|13\rangle-|14\rangle}{\sqrt{2}} \quad \frac{|7\rangle-|11\rangle}{\sqrt{2}}$ \quad & 
    \end{tabular}
}
\end{center}
}
\end{minipage}
\vspace{1em}

\subsubsection{The Symmetric Subspace}
The symmetric subspace corresponds to states that transform according to the trivial representation. The Hamiltonian block $\hat{H}_{sym}$ depends exclusively on the sum of $J_{AA'}$ and $J_{BB'}$, so $[\hat{H}_{sym}, \hat{Q}] = 0$.

\textbf{Structure of the Zero-Quantum Block:}
The explicit form of this $4 \times 4$ doubled Hamiltonian block illustrates the extensive connectivity between basis states:

\begin{equation}
\label{eq:sym_zero_block}
\scalebox{0.9}{
$
\begin{pmatrix}
 -\frac{1}{2}\Sigma J_{intra} - \Delta J_{inter} & \sqrt{2} J_{AB} & \sqrt{2} J_{AB} & \Sigma J_{intra} \\
 \sqrt{2} J_{AB} & \frac{1}{2}\Sigma J_{intra} - \Sigma J_{inter} + 2\Delta & 0 & \sqrt{2} J_{AB'} \\
 \sqrt{2} J_{AB} & 0 & \frac{1}{2}\Sigma J_{intra} - \Sigma J_{inter} - 2\Delta & \sqrt{2} J_{AB'} \\
 \Sigma J_{intra} & \sqrt{2} J_{AB'} & \sqrt{2} J_{AB'} & -\frac{1}{2}\Sigma J_{intra} + \Delta J_{inter}
\end{pmatrix}
$
}
\end{equation}

\subsubsection{The Asymmetric Subspace}
To illustrate the origin of isospectrality, we decompose the Hamiltonian matrix in the asymmetric subspace. The decomposition of the doubled Hamiltonian matrix $2\hat{H}_{asym}$ is given by:

\begin{equation}
\label{eq:decomposition}
\scalebox{0.87}{
$
\setlength{\arraycolsep}{2pt}
\begin{gathered}
2\hat{H}_{asym} = 
\underbrace{
-\frac{1}{2}\Sigma J_{intra} \mathbf{I}_6
}_{\text{Scalar Shift}}
+
\underbrace{
\begin{pmatrix}
 \Delta & 0 & 0 & 0 & 0 & 0 \\
 0 & -\Delta & 0 & 0 & 0 & 0 \\
 0 & 0 & 0 & 0 & 0 & 0 \\
 0 & 0 & 0 & 0 & 0 & 0 \\
 0 & 0 & 0 & 0 & -\Delta & 0 \\
 0 & 0 & 0 & 0 & 0 & \Delta
\end{pmatrix}
}_{\text{Zeeman Term}}
+
\underbrace{
\begin{pmatrix}
 0 & \Delta J_{inter} & 0 & 0 & 0 & 0 
 \\
 \Delta J_{inter} & 0 & 0 & 0 & 0 & 0 \\
 0 & 0 & -\Delta J_{inter} & 0 & 0 & 0 \\
 0 & 0 & 0 & \Delta J_{inter} & 0 & 0 \\
 0 & 0 & 0 & 0 & 0 & \Delta J_{inter} \\
 0 & 0 & 0 & 0 & \Delta J_{inter} & 0
\end{pmatrix}
}_{\text{Invariant Coupling}}
\\[10pt]
+
\underbrace{
\begin{pmatrix}
 \Delta J_{intra} & 0 & 0 & 0 & 0 & 0 \\
 0 & -\Delta J_{intra} & 0 & 0 & 0 & 0 \\
 0 & 0 & 0 & -\Delta J_{intra} 
 & 0 & 0 \\
 0 & 0 & -\Delta J_{intra} & 0 & 0 & 0 \\
 0 & 0 & 0 & 0 & \Delta J_{intra} & 0 \\
 0 & 0 & 0 & 0 & 0 & -\Delta J_{intra}
\end{pmatrix}
}_{\text{Asymmetric Part}}
\end{gathered}
$
}
\end{equation}

The \textit{Asymmetric Part} depends on $\Delta J_{intra} = J_{AA'} - J_{BB'}$. Since the parameter exchange swaps $J_{AA'}$ and $J_{BB'}$, this term changes its sign under $\hat{Q}$.
This results in the transformed matrix:

\begin{equation}
\resizebox{1.1\textwidth}{!}{
$
\underbrace{
\begin{pmatrix}
 d_+ & \Delta J_{inter} & 0 & 0 & 0 & 0 \\
 \Delta J_{inter} & -d_+ & 0 & 0 & 0 & 0 \\
 0 & 0 & -\Delta J_{inter} & -\Delta J_{intra} & 0 & 0 \\
 0 & 0 & -\Delta J_{intra} & \Delta J_{inter} & 0 & 0 \\
 0 & 0 & 0 & 0 & -d_- & \Delta J_{inter} \\
 0 & 0 & 0 & 0 & \Delta J_{inter} & d_-
\end{pmatrix}
}_{2\hat{H}_{asym}}
\xrightarrow{\hat{Q}}
\underbrace{
\begin{pmatrix}
 d_- & \Delta J_{inter} & 
 0 & 0 & 0 & 0 \\
 \Delta J_{inter} & -d_- & 0 & 0 & 0 & 0 \\
 0 & 0 & -\Delta J_{inter} & +\Delta J_{intra} & 0 & 0 \\
 0 & 0 & +\Delta J_{intra} & \Delta J_{inter} & 0 & 0 \\
 0 & 0 & 0 & 0 & -d_+ & \Delta J_{inter} \\
 0 & 0 & 0 & 0 & \Delta J_{inter} & d_+
\end{pmatrix}
}_{2\hat{Q}^\dagger \hat{H}_{asym} \hat{Q}}
$
}
\end{equation}

\noindent where $d_{\pm} = \Delta \pm \Delta J_{intra}$.

\paragraph{Structure of Subblocks and Spectral Invariance}
It can be observed that all subblocks are traceless symmetric matrices of dimension $2\times2$. The eigenvalues for such matrices are equal and opposite in sign.

\vspace{-1em}
\begin{center}
\renewcommand{\arraystretch}{1.3} 
\setlength{\tabcolsep}{3pt}
\resizebox{\textwidth}{!}{
\begin{tabular}{c | cc | cc | cc}
 \multicolumn{1}{c|}{} & \multicolumn{2}{c|}{$M_z = +1$} & \multicolumn{2}{c|}{$M_z = 0$} & \multicolumn{2}{c}{$M_z = -1$} \\ 
 \hline
 \rule{0pt}{6ex}$\hat{H}_{asym}$ & 
 \multicolumn{2}{c|}{$\begin{pmatrix} d_+ & \Delta J_{inter} \\ \Delta J_{inter} & -d_+ \end{pmatrix}$} & 
 \multicolumn{2}{c|}{$\begin{pmatrix} -\Delta J_{inter} & -\Delta J_{intra} \\ -\Delta J_{intra} & \Delta J_{inter} \end{pmatrix}$} & 
 \multicolumn{2}{c}{$\begin{pmatrix} -d_- & \Delta J_{inter} \\ \Delta J_{inter} & d_- \end{pmatrix}$} \\
 
 \rule[-3ex]{0pt}{9ex}$\tan 2\theta$ & 
 \multicolumn{2}{c|}{$\displaystyle \tan 2\theta_+ = \frac{\Delta J_{inter}}{d_+}$} & 
 \multicolumn{2}{c|}{$\displaystyle \tan 2\theta_0 = \frac{\Delta J_{intra}}{\Delta J_{inter}}$} & 
 \multicolumn{2}{c}{$\displaystyle \tan 2\theta_- = \frac{\Delta J_{inter}}{d_-}$} \\
 
 \rule[-6ex]{0pt}{12ex}$E$ & 
 $\begin{gathered} +E_+ \\[0.5ex] +\sqrt{\Delta J_{inter}^2 + d_+^2} \end{gathered}$ & 
 $\begin{gathered} -E_+ \\[0.5ex] -\sqrt{\Delta J_{inter}^2 + d_+^2} \end{gathered}$ & 
 $\begin{gathered} -E_0 \\[0.5ex] -\sqrt{\Delta J_{inter}^2 + \Delta J_{intra}^2} \end{gathered}$ & 
 $\begin{gathered} +E_0 \\[0.5ex] +\sqrt{\Delta J_{inter}^2 + \Delta J_{intra}^2} \end{gathered}$ & 
 $\begin{gathered} -E_- \\[0.5ex] -\sqrt{\Delta J_{inter}^2 + d_-^2} \end{gathered}$ & 
 $\begin{gathered} +E_- \\[0.5ex] +\sqrt{\Delta J_{inter}^2 + d_-^2} \end{gathered}$ \\
 
 \rule[-3ex]{0pt}{7ex}$\psi$ & 
 $\begin{pmatrix} \cos\theta_+ \\ \sin\theta_+ \end{pmatrix}$ & 
 $\begin{pmatrix} -\sin\theta_+ \\ \cos\theta_+ \end{pmatrix}$ & 
 $\begin{pmatrix} \cos\theta_0 \\ \sin\theta_0 \end{pmatrix}$ & 
 $\begin{pmatrix} -\sin\theta_0 \\ \cos\theta_0 \end{pmatrix}$ & 
 $\begin{pmatrix} \cos\theta_- \\ -\sin\theta_- \end{pmatrix}$ & 
 $\begin{pmatrix} \sin\theta_- \\ \cos\theta_- \end{pmatrix}$
\end{tabular}
}
\end{center}
\vspace{1em}

\noindent
\begin{minipage}{\linewidth}
Doubled perturbation operator, line frequencies, and transition moments:

\vspace{-1.5em}
\begin{center}
\small
\renewcommand{\arraystretch}{1.5}
\resizebox{\textwidth}{!}{
\begin{tabular}{c|c|r@{\,$\to$\,}l|c} 

 \multicolumn{1}{c|}{Perturbation operator} & 
 \multicolumn{1}{c|}{Frequency} & 
 \multicolumn{2}{c|}{Transition} & 
 \multicolumn{1}{c}{Doubled transition moment} \\ \hline
 
 \multirow{4}{*}{\begin{tabular}{c} $M_z=+1 \to M_z=0$ \\[1.5ex] $2\Sigma \hat{I}_{x,asym} = \begin{pmatrix} -1 & 1 \\ 1 & 1 \end{pmatrix}$ \end{tabular}} 
 & $\phantom{-}f_1$ & $+E_+$ & $-E_0$ & $\cos\theta_0(-\cos\theta_+ + \sin\theta_+) + \sin\theta_0(\cos\theta_+ + \sin\theta_+)$ \\
 & $\phantom{-}f_2$ & $+E_+$ & $+E_0$ & $\cos\theta_0(\cos\theta_+ + \sin\theta_+) + \sin\theta_0(\cos\theta_+ - \sin\theta_+)$ \\
 & $-f_2$           & $-E_+$ & $-E_0$ & $\cos\theta_0(\cos\theta_+ + \sin\theta_+) + 
 \sin\theta_0(\cos\theta_+ - \sin\theta_+)$ \\
 & $-f_1$           & $-E_+$ & $+E_0$ & $-[\cos\theta_0(-\cos\theta_+ + \sin\theta_+) + \sin\theta_0(\cos\theta_+ + \sin\theta_+)]$ \\ \hline
 
 \multirow{4}{*}{\begin{tabular}{c} $M_z=0 \to M_z=-1$ \\[1.5ex] $2\Sigma \hat{I}_{x,asym} = \begin{pmatrix} 1 & -1 \\ -1 & -1 \end{pmatrix}$ \end{tabular}} 
 & $\phantom{-}f_3$ & $-E_0$ & $-E_-$ & $\cos\theta_0(\cos\theta_- + \sin\theta_-) + \sin\theta_0(-\cos\theta_- + \sin\theta_-)$ \\
 & $\phantom{-}f_4$ & $-E_0$ & $+E_-$ & $-[\cos\theta_0(\cos\theta_- - \sin\theta_-) + \sin\theta_0(\cos\theta_- + \sin\theta_-)]$ \\
 & $-f_4$           & $+E_0$ & $-E_-$ & $-[\cos\theta_0(\cos\theta_- - \sin\theta_-) 
 + \sin\theta_0(\cos\theta_- + \sin\theta_-)]$ \\
 & $-f_3$           & $+E_0$ & $+E_-$ & $-[\cos\theta_0(\cos\theta_- + \sin\theta_-) + \sin\theta_0(-\cos\theta_- + \sin\theta_-)]$ 
\end{tabular}
}
\end{center}
\end{minipage}
\vspace{0.5em}

The subspectrum of the asymmetric subspace consists of four symmetrically located doublets.
The action of the operator $\hat{Q}$ leads to the permutation of pairs of doublets.

\noindent
\begin{minipage}{\linewidth}
After the action of $\hat{Q}$:

\vspace{-1em}
\begin{center}
\renewcommand{\arraystretch}{1.3}
\setlength{\tabcolsep}{3pt}
\resizebox{\textwidth}{!}{%
\begin{tabular}{c | cc | cc |
 cc}
 \multicolumn{1}{c|}{} & \multicolumn{2}{c|}{$M_z = +1$} & \multicolumn{2}{c|}{$M_z = 0$} & \multicolumn{2}{c}{$M_z = -1$} \\ 
 \hline
 
 \rule{0pt}{6ex}$\hat{H}'_{asym}$ & 
 \multicolumn{2}{c|}{$\begin{pmatrix} d_- & \Delta J_{inter} \\ \Delta J_{inter} & -d_- \end{pmatrix}$} & 
 \multicolumn{2}{c|}{$\begin{pmatrix} -\Delta J_{inter} & +\Delta J_{intra} \\ +\Delta J_{intra} & \Delta J_{inter} \end{pmatrix}$} & 
 \multicolumn{2}{c}{$\begin{pmatrix} -d_+ & \Delta J_{inter} \\ \Delta J_{inter} & d_+ \end{pmatrix}$} \\
 
 \rule[-3ex]{0pt}{9ex}$\tan 2\theta'$ & 
 \multicolumn{2}{c|}{$\displaystyle \tan 2\theta_- = \frac{\Delta J_{inter}}{d_-}$} & 
 \multicolumn{2}{c|}{$\displaystyle \tan 2\theta_0 = \frac{\Delta J_{intra}}{\Delta J_{inter}}$} & 
 \multicolumn{2}{c}{$\displaystyle \tan 2\theta_+ = \frac{\Delta J_{inter}}{d_+}$} \\
 
 \rule[-6ex]{0pt}{12ex}$E'$ & 
 $\begin{gathered} +E_- \\[0.5ex] +\sqrt{\Delta J_{inter}^2 + 
 d_-^2} \end{gathered}$ & 
 $\begin{gathered} -E_- \\[0.5ex] -\sqrt{\Delta J_{inter}^2 + d_-^2} \end{gathered}$ & 
 $\begin{gathered} -E_0 \\[0.5ex] -\sqrt{\Delta J_{inter}^2 + \Delta J_{intra}^2} \end{gathered}$ & 
 $\begin{gathered} +E_0 \\[0.5ex] +\sqrt{\Delta J_{inter}^2 + \Delta J_{intra}^2} \end{gathered}$ & 
 $\begin{gathered} -E_+ \\[0.5ex] -\sqrt{\Delta J_{inter}^2 + d_+^2} \end{gathered}$ & 
 $\begin{gathered} +E_+ \\[0.5ex] +\sqrt{\Delta J_{inter}^2 + d_+^2} \end{gathered}$ \\
 
 \rule[-3ex]{0pt}{7ex}$\psi'$ & 
 $\begin{pmatrix} \cos\theta_- \\ \sin\theta_- \end{pmatrix}$ & 
 $\begin{pmatrix} -\sin\theta_- \\ \cos\theta_- \end{pmatrix}$ & 
 $\begin{pmatrix} \cos\theta_0 \\ -\sin\theta_0 \end{pmatrix}$ & 
 $\begin{pmatrix} \sin\theta_0 \\ \cos\theta_0 \end{pmatrix}$ & 
 $\begin{pmatrix} \sin\theta_+ \\ \cos\theta_+ \end{pmatrix}$ & 
 $\begin{pmatrix} \cos\theta_+ \\ -\sin\theta_+ \end{pmatrix}$
\end{tabular}
}
\end{center}
\end{minipage}
\vspace{1em}

\noindent
\begin{minipage}{\linewidth}
Doubled perturbation operator, line frequencies, and transition moments ($\hat{Q}^\dagger \Sigma \hat{I}_{x,asym} \hat{Q} = \Sigma \hat{I}_{x,asym}$):

\vspace{-1.5em}
\begin{center}
\small
\renewcommand{\arraystretch}{1.5}
\resizebox{\textwidth}{!}{
\begin{tabular}{c|c|r@{\,$\to$\,}l|c} 

 \multicolumn{1}{c|}{Perturbation operator} & 
 \multicolumn{1}{c|}{Frequency} & 
 \multicolumn{2}{c|}{Transition} & 
 \multicolumn{1}{c}{Doubled transition moment} \\ \hline
 
 \multirow{4}{*}{\begin{tabular}{c} $M_z=+1 \to M_z=0$ \\[1.5ex] $2\Sigma \hat{I}_{x,asym} = \begin{pmatrix} -1 & 1 \\ 1 & 1 \end{pmatrix}$ \end{tabular}} 
 & $-f_4$           & $+E_-$ & $-E_0$ & $-[\cos\theta_0(\cos\theta_- - \sin\theta_-) + \sin\theta_0(\cos\theta_- + \sin\theta_-)]$ \\
 & $\phantom{-}f_3$ & $+E_-$ & $+E_0$ & $\cos\theta_0(\cos\theta_- + \sin\theta_-) + \sin\theta_0(-\cos\theta_- + \sin\theta_-)$ \\
 & $-f_3$           & $-E_-$ & $-E_0$ & $\cos\theta_0(\cos\theta_- + \sin\theta_-) + \sin\theta_0(-\cos\theta_- + \sin\theta_-)$ \\
 & $\phantom{-}f_4$ & $-E_-$ & $+E_0$ & $\cos\theta_0(\cos\theta_- - \sin\theta_-) + \sin\theta_0(\cos\theta_- + \sin\theta_-)]$ \\ \hline
 
 \multirow{4}{*}{\begin{tabular}{c} $M_z=0 \to M_z=-1$ \\[1.5ex] $2\Sigma \hat{I}_{x,asym} = \begin{pmatrix} 1 & -1 \\ -1 & -1 \end{pmatrix}$ \end{tabular}} 
 & $\phantom{-}f_2$ & $-E_0$ & $-E_+$ & $\cos\theta_0(\cos\theta_+ + \sin\theta_+) + \sin\theta_0(\cos\theta_+ - \sin\theta_+)$ \\
 & $-f_1$           & $-E_0$ & $+E_+$ & $\cos\theta_0(-\cos\theta_+ + \sin\theta_+) + \sin\theta_0(\cos\theta_+ + \sin\theta_+)$ \\
 & $\phantom{-}f_1$ & $+E_0$ & $-E_+$ & $\cos\theta_0(-\cos\theta_+ + \sin\theta_+) + \sin\theta_0(\cos\theta_+ + \sin\theta_+)$ \\
 & $-f_2$           & $+E_0$ & $+E_+$ & $-[\cos\theta_0(\cos\theta_+ + \sin\theta_+) + \sin\theta_0(\cos\theta_+ - \sin\theta_+)]$ 
\end{tabular}
}
\end{center}
\end{minipage}
\vspace{0.5em}

The action of the operator $\hat{Q}$ leads to the permutation of pairs of doublets, those due to transitions $\{M_z = +1 \to M_z = 0\}$ and transitions $\{M_z = 0 \to M_z = -1\}$.
Now the doublets with frequencies $\pm f_3$ and $\pm f_4$ correspond to transitions $M_z = +1 \to M_z = 0$, and the doublets with frequencies $\pm f_1$ and $\pm f_2$ correspond to transitions $M_z = 0 \to M_z = -1$.
The line intensities are preserved and do not change when the blocks are swapped.

\paragraph{Static Nature of Spectral Symmetry}
The explicit Hamiltonian subblocks structure demonstrates that the observed spectral symmetry in the asymmetric subspace is a \textit{static} geometric property.
Since all matrices are inherently traceless, the energy levels are strictly constrained to appear in symmetric pairs $\pm E$.
Consequently, the parameter asymmetry $J_{AA'} \neq J_{BB'}$ (or $\Delta J_{intra} \neq 0$) affects only the magnitude of the energy splitting but is mathematically incapable of breaking the symmetry of the spectrum.

This specific structure leads to a distinct topology of transitions compared to the symmetric subspace.
In the symmetric subspace, the spectral mirror symmetry arises from the correspondence between transitions belonging to symmetrically located blocks (pairing the transitions from the upper part of the energy diagram with those from the lower part).
In contrast, in the asymmetric subspace, the symmetric transitions are generated \textit{locally}.
Specifically, the transitions between blocks $M_z=+1 \to M_z=0$ (and similarly $M_z=0 \to M_z=-1$) independently form symmetric frequency subsets due to the inherent $\pm E$ structure of the participating levels.
Thus, the palindromic structure of transitions is guaranteed a priori by the algebraic form of the Hamiltonian in the symmetrized basis.

\subsection{Summary: The Four Pillars of $AA'BB'$ Symmetry Analysis}

Our analysis confirms that the observation of spectral symmetry relies on four critical components.
Violation of any of these conditions severs the link between the Hamiltonian's algebraic structure and the spectrum's geometric symmetry.
\begin{enumerate}
    \item \textbf{Correct Spin Indexing (Topology of $J$):}
    The nuclei must be indexed (numbered) such that the $J$-coupling matrix exhibits the required topological invariance or balanced reflection relative to the anti-diagonal.
    Arbitrary indexing destroys the structure of the permutation operator $\hat{\Pi}$, making the generalized parity operator $\hat{Q}$ physically meaningless for the system.
    \item \textbf{Frequency Centering ($\sum \nu_i = 0$):}
    Defining resonance frequencies relative to the spectral center of gravity is strictly necessary to render the Zeeman matrix perfectly anti-persymmetric ($H_{ii} = -H_{jj}$).
    In the "Shell" regions, this condition ensures the exact compensation of energies ($+\nu$ and $-\nu$), converting the Zeeman anti-symmetry into the persymmetry required to resolve the conflict with the symmetric $J$-coupling term.
    \item \textbf{Canonical Basis Sorting:}
    The basis functions must be rigorously ordered by $M_z$ and, crucially, arranged to be centrosymmetric with respect to the spin inversion operator $\hat{P}$.
    This ensures that the physical operation of spin inversion maps exactly to the geometric reflection of the matrix indices ($k \leftrightarrow D+1-k$).
    \item \textbf{Correct Symmetrizing Transformation ($\hat{U}$):}
    The unitary transition from the product basis to the symmetry-adapted basis is essential to factorize the Hamiltonian.
    It reveals the distinct algebraic nature of the irreducible blocks:
    \begin{itemize}
        \item The Symmetric block commutes with $\hat{Q}$ ($[\mathbf{H}_{sym}, \hat{Q}] = 0$), implying geometric invariance.
        \item The Asymmetric block exhibits \textit{Static Geometric Symmetry} (Isospectrality). While it does not commute with $\hat{Q}$ in the strict operator sense due to parameter swap, its subblocks are inherently traceless symmetric matrices, which enforces the spectral invariance ($\pm E$) under the parameter reflection.
    \end{itemize}
\end{enumerate}

\vspace{1em}
\noindent The theoretical explanation of spectral symmetry is the result of the synergy between these factors:
\begin{equation}
    \text{Spectral Symmetry} = \text{Topology}(J) + \text{Centering}(\nu) + \text{Basis}(\hat{P}) + \text{Algebra}(\hat{Q}).
\end{equation}

\subsection{Comparison with Magnetic Equivalence $A_nB_n$ ($A_nX_n$)}
It is important to distinguish the $AA'BB'$ case from systems with magnetic equivalence, such as $A_nB_n$ ($A_nX_n)$.
In $A_nB_n$ systems, the intra-group couplings $J_{AA}$ and $J_{BB}$ do not affect the transition energies (they disappear from the relevant parts of the Hamiltonian).
Thus, the condition for spectral symmetry is satisfied trivially. In contrast, in $AA'BB'$ systems, $J_{AA'}$ and $J_{BB'}$ actively contribute to the spectrum, and symmetry
is preserved only due to the specific algebraic structure (cancellation of signs in squared differences) described above.

\section{The General Theorem of Spectral Symmetry}

Based on the preceding analysis, we can formulate the final necessary and sufficient condition for the mirror symmetry of an NMR spectrum.
We introduce the concept of \textit{reflected frequencies} relative to the spectral center of gravity $\nu_0 = \frac{1}{n} \sum \nu_i$.

\begin{theorem}[General Isospectrality and Structural Equivalence]
The NMR spectrum $S(\omega)$ is mirror-symmetric with respect to its center \textit{if and only if} the Hamiltonian satisfies the condition of \textit{frequency reflection isospectrality}:
\begin{equation}
    \label{eq:freq_iso}
    \text{Spec}(\hat{H}(\boldsymbol{\nu}, \boldsymbol{J})) = \text{Spec}(\hat{H}(2\nu_0 - \boldsymbol{\nu}, \boldsymbol{J}))
\end{equation}
subject to the constraint that the unitary transformation linking these Hamiltonians corresponds to a permutation of nuclei.

\vspace{0.5em}
\noindent \textbf{Structural Equivalence:}
This fundamental condition is equivalent to the requirement that there exists at least one specific spin order (permutation $1 \dots N$) which simultaneously ensures:
\begin{enumerate}
    \item \textit{Frequency Balance:} The resonance frequencies are distributed centrosymmetrically in this order ($\nu_i - \nu_0 = -(\nu_{N+1-i} - \nu_0)$).\footnote{\textit{In systems containing large groups of chemically equivalent nuclei, e.g., $[A'B']_n$, the requirement of sorting by resonance frequencies is necessary but not sufficient. Due to the permutation symmetry within equivalent groups, there exist multiple spin orderings that satisfy the frequency sorting condition. However, as matrix analysis demonstrates, the condition of $J$-coupling matrix symmetry is sensitive to the relative ordering of spins within these chemically equivalent groups. Therefore, for such spin systems, the criterion is strictly defined as follows: Mirror symmetry exists if there is at least one specific permutation within the chemically equivalent sets that results in a symmetric or "balanced" $J$-coupling matrix.}}
    \item \textit{Reflection Invariance of the Interaction:} Under this ordering, the frequency reflection is physically equivalent to the reflection of the interaction matrix. Thus, condition (\ref{eq:freq_iso}) reduces to:
    \begin{equation}
        \text{Spec}(\hat{H}(\boldsymbol{\nu}, \boldsymbol{J})) = \text{Spec}(\hat{H}(\boldsymbol{\nu}, \boldsymbol{J}^{refl}))
    \end{equation}
    where the reflected matrix is defined as $(\boldsymbol{J}^{refl})_{ij} = (\boldsymbol{J})_{N+1-j, N+1-i}$.
\end{enumerate}
\end{theorem}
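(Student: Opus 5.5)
We read the theorem as a claim about the stick spectrum: the set of transition frequencies together with their intensities. We prove it through the unitary maps the paper has already built. For sufficiency, suppose there is a unitary $\hat{U}$ implementing a permutation of nuclei with $\hat{U}^\dagger \hat{H}(2\nu_0-\boldsymbol{\nu},\boldsymbol{J})\hat{U} = \hat{H}(\boldsymbol{\nu},\boldsymbol{J})$. The argument then runs in three steps.
\begin{enumerate}
\item \emph{Translation.} By the offset result of Section~3.1, we may shift all frequencies by $-\nu_0$ and assume $\nu_0=0$. The reflected Hamiltonian is then $\hat{H}(-\boldsymbol{\nu},\boldsymbol{J}) = -\hat{H}_Z+\hat{H}_{SS}$.
\item \emph{Reflection.} By Section~3.2, the spectrum of $\hat{H}(-\boldsymbol{\nu},\boldsymbol{J})$ is exactly $S(-\omega)$. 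The underlying map is conjugation by $\hat{P}$, which sends $\hat{I}^-$ to $-\hat{I}^+$.
\item \emph{Relabeling.} A permutation unitary commutes with $\hat{I}_x^{tot}$ and with $\hat{I}_z^{tot}$. It therefore preserves selection rules, transition frequencies and intensities; this is the ``identical spectra'' branch of the Spectral Enantiomerism theorem.
\end{enumerate}
Combining the three steps, the correlation function argument of Section~4.4.1 with $\hat{Q}=\hat{P}\hat{U}$ gives $\mathrm{Corr}(t)$ invariant under $t\to -t$ composed with the reflection. Hence $S(\omega)=S(-\omega)$, or $S(2\nu_0-\omega)=S(\omega)$ before the shift.

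For necessity, assume $S(2\nu_0-\omega)=S(\omega)$. The mirror argument above shows that $\hat{H}(2\nu_0-\boldsymbol{\nu},\boldsymbol{J})$ produces the reflected spectrum, which by assumption equals $S$ itself. So the two Hamiltonians $\hat{H}(\boldsymbol{\nu},\boldsymbol{J})$ and $\hat{H}(2\nu_0-\boldsymbol{\nu},\boldsymbol{J})$ generate identical NMR spectra. We then argue that identical line lists force isospectrality of the energy levels. Transitions connect all levels of a connected spin system, so the energy levels can be reconstructed, up to the $M_z$ offsets, from the transition frequencies and the sum rules. The hypothesis of the theorem restricts us to the case where the linking unitary is a permutation, so the remaining content is to identify that permutation.

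For the structural equivalence, let $\sigma$ be the permutation realizing the equivalence. We reindex the nuclei so that $\sigma$ becomes the reversal $\hat{\Pi}$, completing $\sigma$ to an involution if needed. Requiring $\hat{\Pi}$ to map the reflected frequency vector $2\nu_0-\boldsymbol{\nu}$ onto $\boldsymbol{\nu}$ is precisely the frequency balance condition $\nu_i-\nu_0=-(\nu_{N+1-i}-\nu_0)$. Under that condition, conjugation by $\hat{\Pi}$ acts on $\hat{H}_{SS}$ by $J_{ij}\mapsto J_{N+1-i,N+1-j}$, which equals $(\boldsymbol{J}^{refl})_{ij}$ by symmetry of $\boldsymbol{J}$. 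Condition~(\ref{eq:freq_iso}) therefore becomes $\mathrm{Spec}\,\hat{H}(\boldsymbol{\nu},\boldsymbol{J})=\mathrm{Spec}\,\hat{H}(\boldsymbol{\nu},\boldsymbol{J}^{refl})$. The converse follows by reading the same steps backwards. The $AA'BB'$ computation illustrates this: there the isospectrality holds only blockwise (traceless $2\times2$ blocks), and not as operator commutation.

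The main obstacle is necessity. A palindromic spectrum need not a priori come from a permutation-induced similarity. Identical spectra give isospectral Hamiltonians only under a nondegeneracy or connectivity assumption. Moreover, an isospectral pair need not be linked by a permutation: accidental isospectrality is possible, and $\sigma$ might fail to be an involution. I would handle this by treating the permutation constraint as part of the hypothesis, as the statement does. Where that is insufficient, I would state the result generically, for parameters off a measure-zero set, and flag the degenerate cases as outside the theorem's scope.
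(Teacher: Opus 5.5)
Your sufficiency argument is fine for the case you set up. If a permutation unitary $\hat{U}$ satisfies $\hat{U}^\dagger \hat{H}(2\nu_0-\boldsymbol{\nu},\boldsymbol{J})\hat{U}=\hat{H}(\boldsymbol{\nu},\boldsymbol{J})$, then translation, the $\hat{P}$-reflection and $[\hat{U},\hat{I}_x]=[\hat{U},\hat{I}_z]=0$ give $S(2\nu_0-\omega)=S(\omega)$. This is the paper's geometric mechanism, $[\hat{H},\hat{P}\hat{U}]=0$.

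The gap is necessity. It cannot be patched by moving the permutation constraint into the hypothesis, because under your reading the ``only if'' is false, and the paper's own $AA'BB'$ system is the counterexample. Any permutation carrying $2\nu_0-\boldsymbol{\nu}$ onto $\boldsymbol{\nu}$ must map $\{A,A'\}$ onto $\{B,B'\}$, and so it sends $J_{AA'}$ to $J_{BB'}$. For $J_{AA'}\neq J_{BB'}$ no such permutation exists, yet the spectrum is symmetric. In fact $\hat{H}(\boldsymbol{\nu},\boldsymbol{J})$ and $\hat{H}(\boldsymbol{\nu},\boldsymbol{J}^{refl})$ are not linked by any unitary commuting with $\hat{I}_z$, since the antisymmetric $M_z=\pm1$ blocks exchange $E_+$ and $E_-$. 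This is precisely the Corollary's ``isospectral symmetry'' case, where commutation fails. So citing $AA'BB'$ as an illustration contradicts the reading you use for the first equivalence.

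The intermediate steps carry no information either.
\begin{itemize}
\item After centering, $\hat{H}(\boldsymbol{\nu},\boldsymbol{J})$ and $\hat{H}(-\boldsymbol{\nu},\boldsymbol{J})$ are always similar via $\hat{P}$. Recovering energy-level isospectrality from line lists therefore proves nothing. The recovery also fails on its own terms, since $\hat{I}_x$ does not connect the symmetric and antisymmetric subspaces of $AA'BB'$.
\item Under frequency balance, $\hat{H}(\boldsymbol{\nu},\boldsymbol{J}^{refl})=\hat{Q}\hat{H}(\boldsymbol{\nu},\boldsymbol{J})\hat{Q}^\dagger$ with $\hat{Q}=\hat{P}\hat{\Pi}$. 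So equality of eigenvalue sets in condition 2 holds automatically.
\item ``Reading the steps backwards'' therefore cannot return a permutation. With ``Spec'' meaning energy levels, the criterion would also accept systems with asymmetric spectra. An example is a first-order three-spin system with $\boldsymbol{\nu}=(a,0,-a)$, $J_{13}=0$ and $|J_{12}|\neq|J_{23}|$, whose doublets at $\pm a$ have different splittings.
\end{itemize}

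The missing idea is that ``Spec'' must denote the NMR stick spectrum, i.e.\ lines with their intensities. Then $\hat{Q}$-conjugation shows that the stick spectrum of $\hat{H}(\boldsymbol{\nu},\boldsymbol{J}^{refl})$ is $S(-\omega)$. Given frequency balance, condition 2 is then equivalent to mirror symmetry. This is how the paper handles $AA'BB'$: it computes lines and intensities in the symmetrized blocks, finding the doublets $\pm f_1,\dots,\pm f_4$ permuted between blocks, rather than exhibiting a permutation.

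What then remains for necessity is to show that a palindromic spectrum forces a centrosymmetric frequency multiset, and the paper does not derive this either. For purely scalar couplings, the third central moment of $S$ is proportional to $\sum_i(\nu_i-\nu_0)^3$, independently of $\boldsymbol{J}$. This settles the question for $N\le 4$ but not for larger $N$.
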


\noindent \textit{Physical Interpretation:}
The theorem states that spectral symmetry arises if and only if the spin system admits a specific "palindromic" spin order. In this order, the system is indistinguishable (isospectral) from its own mirror image obtained by reversing the sequence of spin resonance frequencies ($1 \leftrightarrow N, 2 \leftrightarrow N-1, \dots$). This means that swapping the $J$-couplings across the "center" of the index sequence (e.g., exchanging $J_{1,2}$ with $J_{N,N-1}$) must leave the spectrum invariant.

\begin{corollary}[Types of Symmetry]
This theorem encompasses both mechanisms discussed in this work:
\begin{itemize}
    \item \textit{Geometric Symmetry:} If $\boldsymbol{J} = \boldsymbol{J}^{refl}$ (the matrix is explicitly bisymmetric), the Hamiltonian strictly commutes with the generalized parity operator ($[\hat{H}, \hat{Q}] = 0$), and the spectrum is identical (trivial isospectrality). This corresponds to systems like $A_nB_n$.
    \item \textit{Isospectral Symmetry:} If $\boldsymbol{J} \neq \boldsymbol{J}^{refl}$ but the Hamiltonians are isospectral due to algebraic properties (caused by symmetry of spin system), the spectrum is symmetric despite the violation of strict commutation. This corresponds to systems like $AA'BB'$.
\end{itemize}
\end{corollary}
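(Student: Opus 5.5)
The corollary can be read as a case split inside the Structural Equivalence form of the General Theorem. Fix the palindromic spin order whose existence the theorem guarantees, so that frequency balance $\nu_i-\nu_0=-(\nu_{N+1-i}-\nu_0)$ holds; translate so that $\nu_0=0$, which is permitted by the offset result of Section~3.1. In this order, mirror symmetry is equivalent to $\mathrm{Spec}(\hat H(\boldsymbol\nu,\boldsymbol J))=\mathrm{Spec}(\hat H(\boldsymbol\nu,\boldsymbol J^{refl}))$, with the two Hamiltonians linked by $\hat Q=\hat P\hat\Pi$. I will therefore show, for each alternative $\boldsymbol J=\boldsymbol J^{refl}$ and $\boldsymbol J\neq\boldsymbol J^{refl}$, that this condition is met by the mechanism the corollary names.

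\emph{Geometric case.} Compute $\hat Q^\dagger\hat H\hat Q$ term by term.
\begin{itemize}
\item Under $\hat\Pi$, $\hat I_{iz}\mapsto\hat I_{N+1-i,z}$; under $\hat P$, this becomes $-\hat I_{N+1-i,z}$. Hence
\[
\hat Q^\dagger\hat H_Z\hat Q=-\sum_i\nu_{N+1-i}\hat I_{iz}=\hat H_Z
\]
by frequency balance.
\item $\hat P$ leaves $\hat{\vec I}_i\cdot\hat{\vec I}_j$ and $\hat I_{iz}\hat I_{jz}$ invariant; this is the bisymmetry property of Section~2.5. $\hat\Pi$ relabels $J_{ij}\to J_{N+1-i,N+1-j}$, and since $\boldsymbol J$ is symmetric this equals $(\boldsymbol J^{refl})_{ij}$. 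The dipolar $d_{ij}$ is handled identically.
\item Therefore $\hat Q^\dagger\hat H(\boldsymbol\nu,\boldsymbol J)\hat Q=\hat H(\boldsymbol\nu,\boldsymbol J^{refl})$.
\end{itemize}
If $\boldsymbol J=\boldsymbol J^{refl}$, this identity reads $[\hat H,\hat Q]=0$. The characteristic polynomials then coincide trivially, and by the Spectral Enantiomerism theorem together with the unitary-transformation argument of Section~4.4 (using $\hat Q^\dagger\hat I_x\hat Q=-\hat I_x$), the correlation function is invariant and $S(\omega)=S(-\omega)$. For $A_nB_n$ I would note that ordering the $A$ spins first and the $B$ spins reversed gives a bisymmetric $\boldsymbol J$, because magnetic equivalence forces all $J_{AB}$ to be equal.

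\emph{Isospectral case.} Now $\boldsymbol J\neq\boldsymbol J^{refl}$, so the identity above shows that $\hat H$ and $\hat Q^\dagger\hat H\hat Q$ are different operators and strict commutation fails. The spectrum is still symmetric provided $\hat H(\boldsymbol\nu,\boldsymbol J)$ and $\hat H(\boldsymbol\nu,\boldsymbol J^{refl})$ are isospectral, with intensities mapped correspondingly. I will exhibit $AA'BB'$ as an instance: there $\boldsymbol J^{refl}$ is $\boldsymbol J$ with $J_{AA'}\leftrightarrow J_{BB'}$ exchanged.
\begin{itemize}
\item Invoke the $C_2$ symmetrization of Section~5.4. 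The symmetric block depends only on $\Sigma J_{intra}$, so it is literally invariant.
\item The asymmetric block splits into traceless symmetric $2\times2$ blocks. Under the parameter swap these are permuted ($d_+\leftrightarrow d_-$ between $M_z=\pm1$), and the $M_z=0$ block changes only the sign of its off-diagonal entry.
\item The eigenvalue multiset $\{\pm E_+,\pm E_0,\pm E_-\}$ is therefore unchanged.
\item The transition tables after $\hat Q$ show that the intensities are permuted together with the frequencies, $f_{1,2}\leftrightarrow\pm f_{3,4}$.
\end{itemize}
Together these give isospectrality in the strong sense the General Theorem requires, so the spectrum is symmetric despite $[\hat H,\hat Q]\neq0$.

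The main obstacle is the phrase ``due to algebraic properties (caused by symmetry of spin system)''. As stated, the general claim is essentially the sufficiency half of the General Theorem applied to the case $\boldsymbol J\neq\boldsymbol J^{refl}$. I would not try to characterise all such systems. Instead I would treat the second bullet as sufficiency plus the verified $AA'BB'$ example, and check explicitly that the isospectrality includes matching intensities, not just eigenvalues, since equality of characteristic polynomials alone does not suffice.
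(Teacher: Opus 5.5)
Your route is the paper's own: commutation with $\hat Q$ for the geometric case, and the $C_2$-symmetrized $AA'BB'$ blocks with the transition tables for the isospectral case. Your identity $\hat Q^\dagger\hat H(\boldsymbol\nu,\boldsymbol J)\hat Q=\hat H(\boldsymbol\nu,\boldsymbol J^{refl})$ under frequency balance is correct, and it makes precise something the paper only asserts.

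\textbf{The $A_nB_n$ step is wrong.} Putting the $A$ spins first and the $B$ spins reversed does not make $\boldsymbol J$ bisymmetric, because reflection across the anti-diagonal carries the $\boldsymbol J_{AA}$ block onto the $\boldsymbol J_{BB}$ block. Already for $A_2B_2$ one has $(\boldsymbol J^{refl})_{12}=J_{34}$, so $\boldsymbol J=\boldsymbol J^{refl}$ would force $J_{AA}=J_{BB}$. Magnetic equivalence does not impose this; equal $J_{AB}$ only make the inter-group block persymmetric. The missing idea is the one the paper uses when contrasting $A_nB_n$ with $AA'BB'$. For scalar couplings, $\hat H_{AA}=\sum_{i<j}J_{A_iA_j}\hat{\vec I}_{A_i}\cdot\hat{\vec I}_{A_j}$ is invariant under rotations of the $A$ group, so it commutes with the group spin $\hat{\vec F}_A=\sum_i\hat{\vec I}_{A_i}$. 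By magnetic equivalence, the remaining terms involve the $A$ spins only through $\hat F_{Az}$ and $\hat{\vec F}_A\cdot\hat{\vec F}_B$. Hence $\hat H_{AA}$ commutes with the rest of $\hat H$ and with $\hat I^{\pm}$, and the same holds for $\hat H_{BB}$. These terms therefore shift both levels of every allowed transition equally and can be deleted without changing $S(\omega)$. Only this reduced coupling matrix is bisymmetric, so $A_nB_n$ falls under the geometric case only through the effective Hamiltonian.

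\textbf{The correlation-function step needs replacing.} The claim ``$\hat Q^\dagger\hat I_x\hat Q=-\hat I_x$, so the correlation function is invariant and $S(\omega)=S(-\omega)$'' proves nothing. The function $\mathrm{Tr}(e^{-i\hat Ht}\hat I_xe^{i\hat Ht}\hat I_x)=\sum_{m,n}|\langle m|\hat I_x|n\rangle|^2e^{-i(E_m-E_n)t}$ is even in $t$ for every $\hat H$. (Also, the phase-free flip is $\hat P=\prod_i 2\hat I_{ix}$, so in fact $\hat Q^\dagger\hat I_x\hat Q=+\hat I_x$ and $\hat Q^\dagger\hat I^-\hat Q=\hat I^+$.) Use instead the transition pairing behind the Spectral Enantiomerism theorem:
\begin{itemize}
\item Let $[\hat H,\hat Q]=0$, and let $|a\rangle$, $|b\rangle$ be eigenstates in sectors $M$ and $M-1$.
\item Then $\hat Q|b\rangle$ and $\hat Q|a\rangle$ are eigenstates with the same energies, in sectors $1-M$ and $-M$.
\item Moreover $|\langle\hat Qa|\hat I^-|\hat Qb\rangle|=|\langle a|\hat I^+|b\rangle|=|\langle b|\hat I^-|a\rangle|$, so each line at $\omega$ has an equal-intensity partner at $-\omega$.
\end{itemize}
Applied to your identity, the same pairing gives $S_{\boldsymbol J^{refl}}(\omega)=S_{\boldsymbol J}(-\omega)$ for every frequency-balanced system. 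Mirror symmetry is therefore exactly $S_{\boldsymbol J}=S_{\boldsymbol J^{refl}}$.

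\textbf{The eigenvalue bullet carries no information.} Your identity also shows that the eigenvalue multisets of $\hat H(\boldsymbol\nu,\boldsymbol J)$ and $\hat H(\boldsymbol\nu,\boldsymbol J^{refl})$ always coincide, even for the asymmetric $[A'X']_n$ systems. So your bullet on $\{\pm E_+,\pm E_0,\pm E_-\}$ establishes nothing. The entire content of the isospectral case is the intensity check from the transition tables, together with the symmetric block being identical for $\boldsymbol J$ and $\boldsymbol J^{refl}$.
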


\section{Spectral Asymmetry of Some Symmetric $[A'X']_n$ Spin Systems}

\begin{figure}[H]
    \centering
    $D_3$-symmetric $[A'X']_3$ spin system of 1,3,5-trifluorobenzene\\
    \par
    \vspace{1.0em}
    \begin{tikzpicture}[scale=0.94, baseline=(current bounding box.center)]
        \def\R{1.0}
        \coordinate (Center) at (0,0);
        \coordinate (C1) at (330:\R); \coordinate (C2) at (270:\R);
        \coordinate (C3) at (210:\R); \coordinate (C4) at (150:\R);
        \coordinate (C5) at (90:\R);  \coordinate (C6) at (30:\R);
        
        \draw[thick] (C1) -- (C2) -- (C3) -- (C4) -- (C5) -- (C6) -- cycle;
        \draw[thick] ($(C2)!0.13!(Center)$) -- ($(C3)!0.13!(Center)$);
        \draw[thick] ($(C4)!0.13!(Center)$) -- ($(C5)!0.13!(Center)$);
        \draw[thick] ($(C6)!0.13!(Center)$) -- ($(C1)!0.13!(Center)$);
        
        \draw[thick] (C1) -- ++(330:0.5) node[right] {H$^{A''}$};
        \draw[thick] (C2) -- ++(270:0.5) node[below] {F$^{X'}$};
        \draw[thick] (C3) -- ++(210:0.5) node[left] {H$^{A'}$};
        \draw[thick] (C4) -- ++(150:0.5) node[left] {F$^X$};
        \draw[thick] (C5) -- ++(90:0.5) node[above] {H$^A$};
        \draw[thick] (C6) -- ++(30:0.5) node[right] {F$^{X''}$};
    \end{tikzpicture}
    \hspace{2mm}
    \centering
        \renewcommand{\arraystretch}{1.5}
        \setlength{\tabcolsep}{6pt}
        \begin{tabular}{r|ccc:ccc}
            \textit{Spins} & A & A$'$ & A$''$ & X & X$'$ & X$''$ \\ \hline
            \textit{$\nu_i$} & $\nu_A$ & $\nu_A$ & $\nu_A$ & $\nu_X$ & $\nu_X$ & $\nu_X$ \\ \hdashline 
            \multirow[t]{5}{*}{\textit{J-couplings}} 
              &  & $J_{AA'}$ & $J_{AA'}$ & $J_{AX}$ & $J_{AX'}$ & $J_{AX}$ \\
              &  &  & $J_{AA'}$ & $J_{AX}$ & $J_{AX}$ & $J_{AX'}$ \\
              &  &  &  & $J_{AX'}$ & $J_{AX}$ & $J_{AX}$ \\
            \hdashline
              &  &  &  &  & $J_{XX'}$ & $J_{XX'}$ \\
              &  &  &  &  &  & $J_{XX'}$ \\
        \end{tabular}
\end{figure}

\begin{figure}[H]
    \centering
    $C_{2\mathrm{v}}$-symmetric $[A'X']_4$ spin systems\\
    \par
    \vspace{1.0em}
    \begin{tikzpicture}[scale=1.45, baseline=(current bounding box.center)]
        \coordinate (O)  at (0, -0.8);
        \coordinate (BL) at (-1.0, -0.2); \coordinate (BR) at (1.0, -0.2);
        \coordinate (TL) at (-0.6, 0.6);  \coordinate (TR) at (0.6, 0.6);
        \draw[thick] (O) -- (BL) -- (TL) -- (TR) -- (BR) -- (O);
        \draw[thick] (TL) -- (TR);
        \node at (O) [fill=white, inner sep=1pt] {O};
        \draw[thick] (TL) -- ++(0, 0.50) node[above] {H$^X$};
        \draw[thick] (TL) -- ++(0,-0.50) node[below] {H$^{X'}$};
        \draw[thick] (TR) -- ++(0, 0.50) node[above] {H$^{X''}$};
        \draw[thick] (TR) -- ++(0,-0.50) node[below] {H$^{X'''}$};
        \draw[thick] (BL) -- ++(0, 0.50) node[above] {H$^A$};
        \draw[thick] (BL) -- ++(0,-0.50) node[below] {H$^{A'}$};
        \draw[thick] (BR) -- ++(0, 0.50) node[above] {H$^{A''}$};
        \draw[thick] (BR) -- ++(0,-0.50) node[below] {H$^{A'''}$};
    \end{tikzpicture}
    \hspace{2mm}
    \begin{tikzpicture}[scale=0.94, baseline=(current bounding box.center)]
        \coordinate (L_Center) at (-0.866, 0); \coordinate (R_Center) at (0.866, 0);
        \coordinate (S_Top) at (0, 0.5);   \coordinate (S_Bot) at (0, -0.5);
        \coordinate (R_Top) at (0.866, 1); \coordinate (R_SideT) at (1.732, 0.5);
        \coordinate (R_SideB) at (1.732, -0.5); \coordinate (R_Bot) at (0.866, -1);
        \coordinate (L_Top) at (-0.866, 1); \coordinate (L_SideT) at (-1.732, 0.5);
        \coordinate (L_SideB) at (-1.732, -0.5); \coordinate (L_Bot) at (-0.866, -1);
        \draw[thick] (L_Top)--(L_SideT)--(L_SideB)--(L_Bot)--(S_Bot)--(R_Bot)--(R_SideB)--(R_SideT)--(R_Top)--(S_Top)--cycle;
        \draw[thick] (S_Top)--(S_Bot);
        \draw[thick] ($(S_Top)!0.13!(R_Center)$) -- ($(S_Bot)!0.13!(R_Center)$);
        \draw[thick] ($(L_Top)!0.13!(L_Center)$) -- ($(L_SideT)!0.13!(L_Center)$);
        \draw[thick] ($(L_Bot)!0.13!(L_Center)$) -- ($(L_SideB)!0.13!(L_Center)$);
        \draw[thick] ($(R_Top)!0.13!(R_Center)$) -- ($(R_SideT)!0.13!(R_Center)$);
        \draw[thick] ($(R_Bot)!0.13!(R_Center)$) -- ($(R_SideB)!0.13!(R_Center)$);
        \draw[thick] (L_Top) -- ++(0, 0.6) node[above] {H$^A$};
        \draw[thick] (L_Bot) -- ++(0, -0.6) node[below] {H$^{A'}$};
        \draw[thick] (R_Top) -- ++(0, 0.6) node[above] {H$^{A''}$};
        \draw[thick] (R_Bot) -- ++(0, -0.6) node[below] {H$^{A'''}$};
        \draw[thick] (L_SideT) -- ++(150:0.6) node[left] {H$^X$};
        \draw[thick] (L_SideB) -- ++(210:0.6) node[left] {H$^{X'}$};
        \draw[thick] (R_SideT) -- ++(30:0.6)  node[right] {H$^{X''}$};
        \draw[thick] (R_SideB) -- ++(330:0.6) node[right] {H$^{X'''}$};
    \end{tikzpicture}
    \hspace{2mm}
    \begin{tikzpicture}[scale=0.94, baseline=(current bounding box.center)]
        \def\R{1.3} \coordinate (Center) at (0,0);
        \coordinate (C1) at (157.5:\R); \coordinate (C2) at (112.5:\R);
        \coordinate (C3) at (67.5:\R);  \coordinate (C4) at (22.5:\R);
        \coordinate (C5) at (337.5:\R); \coordinate (C6) at (292.5:\R);
        \coordinate (C7) at (247.5:\R); \coordinate (C8) at (202.5:\R);
        \draw[thick] (C1)--(C2)--(C3)--(C4)--(C5)--(C6)--(C7)--(C8)--cycle;
        \foreach \start/\end in {C2/C3, C4/C5, C6/C7, C8/C1} {
             \draw[thick] ($(\start)!0.1!(Center)$) -- ($(\end)!0.1!(Center)$);
        }
        \draw[thick] (C2) -- ++(112.5:0.4) node[above left]  {{H}$^A$};
        \draw[thick] (C3) -- ++(67.5:0.4)  node[above right] {{H}$^{A''}$};
        \draw[thick] (C7) -- ++(247.5:0.4) node[below left]  {{H}$^{A'}$};
        \draw[thick] (C6) -- ++(292.5:0.4) node[below right] {{H}$^{A'''}$};
        \draw[thick] (C1) -- ++(157.5:0.4) node[left]  {{F}$^X$};
        \draw[thick] (C8) -- ++(202.5:0.4) node[left]  {{F}$^{X'}$};
        \draw[thick] (C4) -- ++(22.5:0.4)  node[right] {{F}$^{X''}$};
        \draw[thick] (C5) -- ++(337.5:0.4) node[right] {{F}$^{X'''}$};
        \node at (Center) [font=\large] {2-};
    \end{tikzpicture}
    
    \par
    \vspace{0.5em}
        \renewcommand{\arraystretch}{1.4}
        \setlength{\tabcolsep}{4pt}
        \begin{tabular}{r|cccc:cccc}
            \textit{Spins} & A & A$'$ & A$''$ & A$'''$ & X & X$'$ & X$''$ & X$'''$ \\ \hline
            \textit{$\nu_i$} & $\nu_A$ & $\nu_A$ & $\nu_A$ & $\nu_A$ & $\nu_X$ & $\nu_X$ & $\nu_X$ & $\nu_X$ \\ \hdashline 
            \multirow[t]{7}{*}{\textit{J-couplings}} 
              &  & $J_{AA'}$ & $J_{AA''}$ & $J_{AA'''}$ & $J_{AX}$ & $J_{AX'}$ & $J_{AX''}$ & $J_{AX'''}$ \\
              &  &  & $J_{AA'''}$ & $J_{AA''}$ & $J_{AX'}$ & $J_{AX}$ & $J_{AX'''}$ & $J_{AX''}$ \\
              &  &  &  & $J_{AA'}$ & $J_{AX''}$ & $J_{AX'''}$ & $J_{AX}$ & $J_{AX'}$ \\
              &  &  &  &  & $J_{AX'''}$ & $J_{AX''}$ & $J_{AX'}$ & $J_{AX}$ \\ 
            \hdashline
              &  &  &  &  &  & $J_{XX'}$ & $J_{XX''}$ & $J_{XX'''}$ \\
              &  &  &  &  &  &  & $J_{XX'''}$ & $J_{XX''}$ \\
              &  &  &  &  &  &  &  & $J_{XX'}$ \\
        \end{tabular}
\end{figure}

\begin{figure}[H]
   \centering
   $D_4$-symmetric $[A'X']_4$ spin system\\
    \par
    \vspace{1.0em}
        \centering
        \begin{tikzpicture}[scale=0.94, baseline=(current bounding box.center)]
            \def\R{1.3} 
            \coordinate (Center) at (0,0);
            \coordinate (C1) at (157.5:\R); \coordinate (C2) at (112.5:\R);
            \coordinate (C3) at (67.5:\R);  \coordinate (C4) at (22.5:\R);
            \coordinate (C5) at (337.5:\R); \coordinate (C6) at (292.5:\R);
            \coordinate (C7) at (247.5:\R); \coordinate (C8) at (202.5:\R);
            \draw[thick] (C1)--(C2)--(C3)--(C4)--(C5)--(C6)--(C7)--(C8)--cycle;
            \foreach \start/\end in {C2/C3, C4/C5, C6/C7, C8/C1} {
                 \draw[thick] ($(\start)!0.1!(Center)$) -- ($(\end)!0.1!(Center)$);
            }
            \draw[thick] (C2) -- ++(112.5:0.4) node[above left] {H$^A$};
            \draw[thick] (C3) -- ++(67.5:0.4) node[above right] {F$^{X'''}$};
            \draw[thick] (C4) -- ++(22.5:0.4) node[right] {H$^{A'''}$};
            \draw[thick] (C5) -- ++(337.5:0.4) node[right] {F$^{X''}$};
            \draw[thick] (C6) -- ++(292.5:0.4) node[below right] {H$^{A''}$};
            \draw[thick] (C7) -- ++(247.5:0.4) node[below left] {F$^{X'}$};
            \draw[thick] (C8) -- ++(202.5:0.4) node[left] {H$^{A'}$};
            \draw[thick] (C1) -- ++(157.5:0.4) node[left] {F$^X$};
            \node at (Center) [font=\large] {2-};
        \end{tikzpicture}
        \renewcommand{\arraystretch}{1.4}
        \setlength{\tabcolsep}{4pt}
        \begin{tabular}{r|cccc:cccc}
            \textit{Spins} & A & A$'$ & A$''$ & A$'''$ & X & X$'$ & X$''$ & X$'''$ \\ \hline
            \textit{$\nu_i$} & $\nu_A$ & $\nu_A$ & $\nu_A$ & $\nu_A$ & $\nu_X$ & $\nu_X$ & $\nu_X$ & $\nu_X$ \\ \hdashline 
            \multirow[t]{7}{*}{\textit{J-couplings}} 
              &  & $J_{AA'}$ & $J_{AA''}$ & $J_{AA'}$ & $J_{AX}$ & $J_{AX'}$ & $J_{AX'}$ & $J_{AX}$ \\
              &  &  & $J_{AA'}$ & $J_{AA''}$ & $J_{AX}$ & $J_{AX}$ & $J_{AX'}$ & $J_{AX'}$ \\
              &  &  &  & $J_{AA'}$ & $J_{AX'}$ & $J_{AX}$ & $J_{AX}$ & $J_{AX'}$ \\
              &  &  &  &  & $J_{AX'}$ & $J_{AX'}$ & $J_{AX}$ & $J_{AX}$ \\ 
            \hdashline
              &  &  &  &  &  & $J_{XX'}$ & $J_{XX''}$ & $J_{XX'}$ \\
              &  &  &  &  &  &  & $J_{XX'}$ & $J_{XX''}$ \\
              &  &  &  &  &  &  &  & $J_{XX'}$ \\
        \end{tabular}
\end{figure}

The spin ordering was assigned as follows: 1) spins were grouped by chemical equivalence to satisfy the resonance frequency balance condition; 2) within chemically equivalent groups ($A$ and $X$), the order was chosen to be topologically mutually consistent, ensuring that $J_{AX} = J_{A'X'} = J_{A''X''} = J_{A'''X'''}$.
For 1,3,5-trifluorobenzene and the 1,3,5,7-tetrafluoro-substituted cyclooctatetraene dianion, this ordering coincides with the canonical topological order.
In all considered spin systems, the $J$-coupling matrices exhibit high symmetry, with the inter-group coupling blocks being symmetric with respect to the anti-diagonal (persymmetric).
However, in contrast to the $[AA'BB']$ case, due to the algebraic properties of the Hamiltonian, there is no balancing of topologically equivalent homonuclear coupling constant pairs in these systems. Specifically, the constant pairs $\{J_{AA'}, J_{XX'}\}$, $\{J_{AA''}, J_{XX''}\}$, and $\{J_{AA'''}, J_{XX'''}\}$ do not form balanced pairs.
Consequently, the spectra are not invariant under the permutation of constants within these pairs and do not possess mirror symmetry, despite the high symmetry of the spin systems themselves.
\par
\vspace{1.0em}
{\footnotesize
\textit{It is worth noting that in the product basis, the doubled off-diagonal elements of the spin Hamiltonian correspond to the spin-spin coupling constants. In systems with a single second-order symmetry element (e.g., the $[AA'BB']$ case), the symmetrized basis consists of sums and differences of pairs of product basis functions, leading to off-diagonal elements that contain sums and differences of spin-spin coupling constant pairs. However, in systems with higher symmetry, the symmetrized linear combinations involve a larger number of basis functions. As a result, the off-diagonal elements contain linear combinations of multiple constants, which precludes the guarantee of pairwise balancing.}}

\subsection*{Optimality of the Topologically Mutually Consistent Spin Order}

The examples above raise a fundamental question regarding the matrix representation: \textit{Is it possible to find a different spin permutation that yields a "better" or more symmetric form of the $J$-coupling matrix?} Here we demonstrate that the topologically mutually consistent ordering, utilized in the examples above, is the optimal choice.

\begin{theorem*}
For symmetric spin systems of types $[A'B']_n$ $([A'X']_n)$, the \textit{Topologically Mutually Consistent} spin ordering is most optimal. It maximizes the geometric symmetry of the $J$-coupling matrix structure by rendering the intergroup coupling block persymmetric.
\end{theorem*}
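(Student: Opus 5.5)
I will make the statement precise before proving it. Fix the ordering in which the $A$ group occupies positions $1,\dots,n$ and the $X$ group positions $n+1,\dots,2n$; this is forced by the frequency-balance condition of the General Theorem, since $\nu_i-\nu_0=-(\nu_{2n+1-i}-\nu_0)$ must hold. The remaining freedom is then a pair of permutations $\sigma\in S_n$ acting on the $A$ positions and $\tau\in S_n$ acting on the $X$ positions. Under this freedom the $J$-matrix takes the block form
\[
\boldsymbol{J}=\begin{pmatrix} J^{AA} & K \\ K^{T} & J^{XX}\end{pmatrix},
\]
where the intergroup block $K$ transforms as $K\mapsto P_\sigma K P_\tau^{T}$. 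The reflection $(\boldsymbol{J}^{refl})_{ij}=\boldsymbol{J}_{2n+1-j,2n+1-i}$ exchanges the diagonal blocks, so $J^{AA}$ goes to the anti-transpose of $J^{XX}$ and vice versa. It maps the off-diagonal block $K$ to its own anti-transpose $K^{\tau}$, where $(K^\tau)_{ij}=K_{n+1-j,n+1-i}$. The only part of the geometric symmetry $\boldsymbol{J}=\boldsymbol{J}^{refl}$ that can be realized independently of the homonuclear constants is therefore $K=K^{\tau}$, that is, $K$ persymmetric. I will read ``most optimal'' as follows: the best achievable symmetry of the $J$-matrix consists of $K$ persymmetric, together with the diagonal blocks being as reflection-compatible as the chemistry permits. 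The topologically consistent order attains this.

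The first step uses the molecular symmetry group $G$. Because $G$ acts on the $A$ and $X$ sets by isomorphic permutation actions, $J_{AX}$-type constants depend only on the orbit of a pair $(a,x)$ under $G$. Choosing the topologically mutually consistent order means labelling $A^{(k)}$ and $X^{(k)}$ so that the bijection $A^{(k)}\leftrightarrow X^{(k)}$ is $G$-equivariant. Then $K_{kl}=f(g_{kl})$, where $g_{kl}$ is the relative group element (or orbit class), and $K$ is symmetric, $K=K^{T}$, since $J_{A^{(k)}X^{(l)}}=J_{A^{(l)}X^{(k)}}$ by the equivariance. Next I exhibit an involution in $G$ (or one realized by the cyclic labelling $k\mapsto n+1-k$) that maps the pair $(k,l)$ to $(n+1-k,n+1-l)$. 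This gives $K$ centrosymmetric, and symmetry combined with centrosymmetry yields persymmetry. I would check this directly against the tables for $D_3$, $C_{2v}$ and $D_4$ given above, where $K$ is visibly a symmetric circulant or block-Toeplitz matrix.

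The second step is optimality. For any other admissible $(\sigma,\tau)$ I show that $K$ can be at best equally symmetric. The persymmetry condition $P_\sigma K P_\tau^T=(P_\sigma K P_\tau^T)^\tau$ is equivalent to $K$ being invariant under a relabelling pair. The pairs that preserve the multiset pattern of $K$ are exactly those compatible with a $G$-equivariant bijection; otherwise the diagonal of $K$, which contains the largest ($J_{AX}$, ortho) couplings in the consistent order, gets scrambled and equal entries no longer sit at anti-diagonally mirrored positions. So every order achieving a persymmetric $K$ is topologically consistent up to a simultaneous relabelling $\sigma=\tau$, which gives equivalent matrices. I would also show that no order makes the full $\boldsymbol{J}$ bisymmetric: that would require $J^{AA}$ to equal the anti-transpose of $J^{XX}$ entrywise, which fails generically because $J_{AA'}\neq J_{XX'}$. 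Hence persymmetry of the intergroup block is the maximal attainable geometric symmetry.

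The main obstacle is the optimality direction. ``Maximizes symmetry'' needs a partial order, and I would use the set of matrix positions satisfying the reflection identity. The combinatorial argument that any persymmetric $K$ forces an equivariant labelling could fail when $K$ has accidental coincidences. I would handle this by assuming generic distinct values for the inequivalent coupling classes and arguing via the orbit decomposition of $G$ on $A\times X$.
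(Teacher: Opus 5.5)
You get the reduction right. The anti-diagonal reflection sends $K$ to its anti-transpose and exchanges $J^{AA}$ with the anti-transpose of $J^{XX}$. So persymmetry of $K$ is the generic ceiling, and showing that the consistent order attains it is all that ``maximizes'' requires. The paper's proof only asserts that it does. Your mechanism for it, however, fails, and it fails on the paper's own examples.

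\textbf{Step 1.} Equivariance gives $J_{A^{(k)}X^{(l)}}=J_{A^{(l)}X^{(k)}}$ only when some symmetry operation swaps $A^{(k)}$ and $A^{(l)}$. Under $C_n$ with $n\ge 3$ this never happens for non-antipodal pairs. In the paper's $D_3$ and $D_4$ tables the intergroup block is $\begin{pmatrix} a&b&a\\ a&a&b\\ b&a&a\end{pmatrix}$ and $\begin{pmatrix} a&b&b&a\\ a&a&b&b\\ b&a&a&b\\ b&b&a&a\end{pmatrix}$ respectively, with $a=J_{AX}$ and $b=J_{AX'}$. These are circulants that are neither symmetric nor centrosymmetric, so the check you propose would fail.

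The cause is that the paper's matching $A^{(k)}\mapsto X^{(k)}$ there is a rotation by $\pi/n$. It is equivariant only under the rotation subgroup. For the $D_4$ dianion, no $D_4$-equivariant bijection from H sites to F sites exists at all, because the two kinds of site are stabilized by reflections from the two non-conjugate classes. So your premise of isomorphic permutation actions is false there.

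Persymmetry in these cases comes from a different fact. With labels in cyclic order, $K_{kl}=c(l-k \bmod n)$ is Toeplitz, hence $K_{kl}=K_{n+1-l,\,n+1-k}$. What you actually need is this: for every $(k,l)$, some symmetry operation carries the labelled pair $(A^{(k)},X^{(l)})$ to $(A^{(n+1-l)},X^{(n+1-k)})$. In the cyclic case that operation is a rotation; for the $C_{2v}$ systems it is a Klein-group element. For those $C_{2v}$ systems your argument does go through, since the action is regular and the reversal $\rho$ lies in $G$. Factoring the requirement into a swap and $\rho$, each separately in $G$, is strictly stronger and not available in general.

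\textbf{Step 2.} Your optimality step cannot be carried out. It is false, even for generic constants, that every ordering with persymmetric $K$ is consistent up to a simultaneous relabelling $\sigma=\tau$. In $D_3$ write $K=a\,\mathbf{1}\mathbf{1}^{T}+(b-a)P$, where $P$ is the permutation matrix recording para pairs. Then $K$ is persymmetric for four of the six choices of $P$:
\begin{itemize}
\item the identity, which is your $D_3$-equivariant order, with para couplings on the diagonal;
\item the two 3-cycles, which are the paper's order, with ortho couplings on the diagonal;
\item the anti-identity.
\end{itemize}
A simultaneous relabelling only permutes the diagonal of $K$. The diagonal multisets $\{b,b,b\}$, $\{a,a,a\}$ and $\{a,b,a\}$ therefore show these three classes are pairwise inequivalent. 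No orbit-decomposition argument can yield uniqueness. Your heuristic that the diagonal carries the ortho couplings is also contradicted by your own equivariant order. Drop the uniqueness claim; the ceiling argument above is all the statement needs.

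A smaller point: frequency balance forces only positions $i$ and $2n+1-i$ to belong to different groups, not a block order. For example, $AXAX$ is balanced when $n=2$. Restricting to pairs $(\sigma,\tau)$ is therefore an assumption, not a consequence; the paper makes it tacitly as well.
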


\begin{proof}
Let the condition for geometric spectral symmetry be the invariance of the $J$-matrix under reflection across the anti-diagonal.
For the block-structured matrix
\begin{equation}
    \boldsymbol{J} = 
    \begin{pmatrix}
     \boldsymbol{J}_{AA} & \boldsymbol{J}_{AX} \\
     \boldsymbol{J}_{AX}^T & \boldsymbol{J}_{XX}
    \end{pmatrix}
\end{equation}
this requires the off-diagonal block $\boldsymbol{J}_{AX}$ to be \textit{persymmetric} (symmetric with respect to its own anti-diagonal).

\paragraph{1. Analysis of the Mutually Consistent Ordering}
When the order is chosen to be topologically mutually consistent (mapping the indices of group $X$ to group $A$ via the operations of the molecular point group), the sub-block $\boldsymbol{J}_{AX}$ inherits the symmetry structure of the molecule.
For the considered symmetry groups (e.g., $C_n$, $C_{nv}$, $D_n$), this ordering ensures that the coupling constants satisfy the condition $J_{i,j} = J_{N+1-j, N+1-i}$.
Thus, the block $\boldsymbol{J}_{AX}$ becomes inherently \textit{persymmetric}, guaranteeing that the heteronuclear interactions satisfy the spectral symmetry requirements identically (automatically).

\paragraph{2. Analysis of Inconsistent Orderings}
Any permutation that violates the mutual consistency (e.g., by reordering group $X$ differently from group $A$) breaks the symmetry correspondence between the indices.
Consequently, the matrix elements symmetric with respect to the anti-diagonal would correspond to physically different coupling constants.
This introduces new asymmetry into the $J$-coupling matrix that is not intrinsic to the system's physics.

\paragraph{Conclusion}
The Topologically Mutually Consistent Ordering is unique in that it leverages the molecular symmetry to render the $\boldsymbol{J}_{AX}$ block perfectly symmetric.
It proves that the observed spectral asymmetry arises exclusively from the irreducible imbalance between the intragroup couplings ($\boldsymbol{J}_{AA} \neq \boldsymbol{J}_{XX}^{refl}$).
\end{proof}

\section{Implications for Structure Elucidation (The Inverse Problem)}

Since the General Theorem of spectrum symmetry provides necessary and sufficient conditions, the experimental observation of a mirror-symmetric spectrum imposes strict constraints on the topology of the underlying spin system.
This allows for the partial solution of the "inverse problem": inferring spin system properties solely from spectral symmetry.

Based on the General Theorem, we can formulate the following rule: \textit{If the experimental spectrum possesses mirror symmetry, then there must exist a specific spin order in which simultaneously:}

\begin{enumerate}
    \item The spin resonance frequencies are symmetrically ordered about the mid-resonance frequency ($\nu_0$); and
    
    \item In this spin order, the spectrum is invariant with respect to the reflection of the $J$-coupling matrix relative to the anti-diagonal (due to the equality or balance of the corresponding coupling constants).
\end{enumerate}

Consequently, any structural candidate that cannot support such a "palindromic" ordering of frequencies and balanced condition for couplings (e.g., due to the lack of necessary internal symmetry) can be immediately rejected.

\section{Conclusion}

The mirror symmetry of NMR spectra is governed by the rigid interplay between the distribution of resonance frequencies and the topology of the $J$-coupling network.
We have derived the \textit{General Theorem of Spectral Symmetry}, establishing that:

\begin{enumerate}
    \item \textbf{The Fundamental Criterion:} Mirror symmetry arises if and only if the spin system admits a specific numbering of nuclei (a "palindromic spin order") that \textit{simultaneously} satisfies two conditions:
    \begin{itemize}
        \item \textit{Frequency Balance:} The resonance frequencies are distributed centrosymmetrically relative to the spectral center ($\nu_i - \nu_0 = -(\nu_{N+1-i} - \nu_0)$).
        \item \textit{Interaction Invariance:} In this specific order, the spectrum remains invariant under the reflection of the $J$-coupling matrix across the anti-diagonal.
    \end{itemize}
    
    \item \textbf{Mechanism of Invariance:} This spectral invariance of the interaction is realized in two distinct ways:
    \begin{itemize}
        \item \textit{Geometric Symmetry:} The $J$-matrix is explicitly symmetric ($\boldsymbol{J} = \boldsymbol{J}^{refl}$), which implies the strict commutation with the generalized parity operator ($[\hat{H}, \hat{Q}] = 0$).
        \item \textit{Isospectral Symmetry:} The $J$-matrix is not symmetric, but forms "balanced pairs" of constants that render the Hamiltonian isospectral under reflection ($\text{Spec}(\boldsymbol{J}) = \text{Spec}(\boldsymbol{J}^{refl})$).
    \end{itemize}
\end{enumerate}

Systems with full magnetic equivalence (such as $A_n B_n$ or $A_n X_n$) satisfy these conditions trivially because the symmetry-breaking intra-group couplings do not affect the spectrum.
Systems like $AA'BB'$ satisfy them via algebraic isospectrality (balanced pairs) originating from the internal symmetry of the spin system.
However, in more complex systems with chemically equivalent but magnetically non-equivalent spins (like $[A'B']_n$), the specific topological balance of $J$-couplings required for isospectrality may be absent.
Our analysis of $[A'X']_n$ systems (such as 1,3,5-trifluorobenzene) reveals that high molecular symmetry alone is insufficient to guarantee spectral symmetry. In such cases, the lack of balance between intragroup
coupling constants required for isospectrality results in asymmetric spectra.

\end{document}